\newcounter{mycount}[section]
\newtheorem{lemma}{Lemma}
\newtheorem{thm}{Theorem}
\newtheorem{proposition}{Proposition}
\newtheorem{corollary}{Corollary}
\newenvironment{proof}
{\vspace{0.3 em} \noindent {\bf Proof.}}{\QED {\vspace{0.7 em}}}
\def\squareforqed{$\Box$}
\def\QED{\ifmmode\squareforqed\else{\unskip\nobreak\hfil
\penalty50\hskip1em\null\nobreak\hfil\squareforqed
\parfillskip = 0pt\finalhyphendemerits = 0\endgraf}\fi}
\newcommand{\blank}{\mbox{$\flat$}}
\newcommand{\exptime}{\mbox{\textit{EXP}}}
\newcommand{\doublyExptime}{\mbox{\textit{2EXP}}}
\newcommand{\up}{\mbox{\textit{UP}}}
\newcommand{\coup}{\mbox{\textit{co-UP}}}
\newcommand{\trees}[1] {\mathcal{T}_{#1}}
\newcommand{\pZ} {\mbox{\textit{Player 0}}}
\newcommand{\pO} {\mbox{\textit{Player 1}}}
\title{Tree games with regular objectives}
\author{Marcin Przybyłko
\footnote{The author is supported by the \emph{Expressiveness of Modal Fixpoint Logics} project realized within the 5/2012 Homing Plus programme of the Foundation for Polish Science, co-financed by the European Union from the Regional Development Fund within the Operational Programme Innovative Economy (``Grants for Innovation'').}
\institute
 {
  $\begin{array}{c c}
    \text{University of New Caledonia} & \text{University of Warsaw} \\
    \text{Noumea, New Caledonia} & \text{Warsaw, Poland}\\
  \end{array}$
 }
\email{M.Przybylko@mimuw.edu.pl}
}
\begin{document}

\date{}

\maketitle
\paragraph{Abstract}
\begin{abstract}
We study \emph{tree games} developed recently by Matteo Mio as a game interpretation of~the probabilistic $\mu$-calculus.
With expressive power comes complexity. Mio showed that \emph{tree games} are able to encode Blackwell games and, consequently, are not determined under deterministic strategies.

We show that \emph{non-stochastic tree games} with objectives recognisable by so-called \emph{game automata} are determined under deterministic, finite memory strategies.
Moreover,  we give an elementary algorithmic procedure which,
for an arbitrary regular language L and a finite non-stochastic
tree game with a winning objective L decides if the game is
determined under deterministic strategies.

\end{abstract}

%\tableofcontents

\section{Introduction}
\label{s:intro}

Tree  games were developed by Matteo Mio as a~framework in which one could provide a precise game semantics of a certain extension of probabilistic $\mu$-calculus (the logic $pL\mu^{\odot}$), cf. \cite{mio_thesis}.
This goal was achieved in the form of stochastic two-player meta-parity games, a special restriction of tree games that coincides with the logic in question.

Tree games generalise standard turn based games (see, e.g. \cite{ApGr11}, \cite{Chatterjee07_StochasticOmegaRegularGames}) by allowing the execution of a play to be split into concurrent, independently executed sub-games.
This is formalised by introducing a new kind of vertices called \emph{branching vertices}.
When a play reaches one of those vertices, the game automatically splits into several sub-games, one for each successor of~the currently visited branching vertex, and continues their execution independently.
With arbitrary objectives this feature may deprive players of some information and, therefore, result in the loss of determinacy under deterministic strategies, cf. \cite{mio_thesis}, section~4.1.

When studying two player games per se, we are mainly interested in two goals.
To find winning strategies, i.e., descriptions of moves of a player that will guarantee certain objectives.
Or, when that is not possible, to find relatively simple arguments implying that there are no such strategies.
In this paper we try to fulfil those goals in the setting of~the tree games with regular objectives, i.e., objectives that are expressed by non-deterministic tree automata,
and deterministic strategies.

We consider regular objectives for several reasons.
They allow us to describe non-trivial relations between the concurrent sub-games,
e.g., using regular tree languages we can request that exactly one sub-game achieves the objective.
They are powerful enough to deprive players of perfect information:
the reduction of Blackwell games presented in \cite{mio_thesis} is, in~fact, obtained by enriching the original payoff function with a condition that can be defined by a regular tree language.
Last but not least, they are defined by arguably simple, well behaved, and fairly well understood model of computation namely parity tree automata.

\paragraph{Our contribution.}
In this paper, we bring two results.
The first one  proves that regular objectives defined by so called \emph{game automata} (cf. \cite{DBLP:conf/fsttcs/DuparcFM11}, \cite{DBLP:conf/lics/FacchiniMS13}) retain the determinacy under deterministic strategies.
The second one provides a simple algorithm that, for arbitrary regular objectives, decides in doubly exponential time whether and which player has a winning strategy.
\vspace{-9pt}
\paragraph{Outline.}
Section~\ref{s:def} provides basic definitions.
In Section~\ref{s:main_thm} we state the main result and
describe the intuition behind the proof, which can be found in Section~\ref{s:main_lemma}.
In Section~\ref{s:reductions} we discuss expressive power of tree games, their relation to regular tree languages, and formulate a~simple property of a tree languages that causes the lack of determinacy under deterministic strategies.
Section~\ref{s:winning_strats} provides a simple automata-based algorithm to decide in
doubly exponential time whether either of~the players has a winning strategy.
Finally, in Section~\ref{s:end} we mention several possible directions of future research.
\vspace{-9pt}
\section{Basic definitions}
\label{s:def}

By $\gamma$ we understand the following polymorphic function
\vspace{-1pt}
$$
  \gamma[x,y,z] := \left\{
  \begin{array}{l l}
  y & \quad \text{ if } x \text{ is true}, \\
  z & \quad \text{ otherwise. }
  \end{array} \right.
  \vspace{-1pt}
$$

$\mathbb{N}$ is the set of natural numbers.
An alphabet $\Gamma \subseteq \mathbb{N}$ is any finite subset of natural numbers.
For technical purposes, we will always assume that $\Gamma$ contains special number $\blank$.
By $\Gamma^{*}$ (resp. $\Gamma^{\omega}$) we denote set of all finite (resp. infinite) sequences of elements from $\Gamma$.
$\Gamma^{+}$ is the set of all non-empty finite sequences.
For any two sequences $u,v$ we write $u \sqsubseteq v$ if u is a prefix of v.
We say that sequence $u \in \mathbb{N}^\omega$ is \emph{winning} if $\liminf\limits_{n \to \infty} u(n)$ is even.
Any sequence that is not winning is called \emph{loosing}.

We use \emph{record notation} throughout this article.
If $t = \langle t_1, t_2, \dots, t_k\rangle$ is a tuple, then by $t.t_n$ we denote component $t_n$.
By $t[y:= x]$ we denote tuple obtained from $t$ by changing component $y$ into $x$.

Whenever $E$ is a binary relation, $t$ is called a successor of $s$ if $\langle s,t \rangle \in E$. By $E(s)$ we denote the set of all successors of $s$.
For every set $L$, $\chi_{L}$ denotes the indicator function of set $L$, i.e., function 
$\chi_L(x) :=\gamma[x \in L, 1,0].$
\vspace{-9pt}
\paragraph{Labelled tree.}
A labelled tree $t$ is any function $t: 2^{*} \to \Gamma$ such that for every two words $u,w \in 2^{*}$ if $t(u) = \blank$, then $t(uw) = \blank$.
Elements of~the set $2^{*}:= \{ 0,1 \}^{*}$ are called nodes, and the set of all trees labelled with alphabet $\Gamma$, i.e. the set of functions $t: 2^{*} \to \Gamma$, is denoted $\trees{\Gamma}$.

Let $u \in 2^{*}$ be a node and $t$ be a tree, by $t.u$ we  denote the sub-tree of $t$ rooted in node $u$, i.e,
the tree $t_2$ such that $t_2(v) = t(uv)$.
If $t_2$ is a tree, then by $t[u:= t_2]$ we denote tree $t$ with sub-tree $t.u$ replaced by~$t_2$, i.e., $t[u:= t_2]$ satisfies
\[
t[u:= t_2](x) = \left\{ 
  \begin{array}{l l}
  t_2(v)          & \quad \text{if } x = uv,\\
  t(x)            & \quad \text{otherwise}.\\
  \end{array} \right.
\]
% $t[u:= t_2](w) = \gamma(w = uv, t_2(v), (t(w)))$.
A \emph{restriction} of a tree $t$ is any tree $t'$ such that for every node $u \in 2^{*}$, $t'(u) = \blank$ or $t'(u) = t(u)$.
% if $t(u) = \blank$, then $t'(u) = \blank$ and if $t'(u) \not = \blank,$ then $t'(u) = t(u)$.
Node labelled $\blank$ is called a \emph{blank} node, and any sub-tree with only blank nodes is called a \emph{blank} sub-tree.
Slightly abusing notation, we denote a blank tree by $\blank$.
Intuitively, label $\blank$ signifies that whole sub-tree is~missing, and a~restriction is~the same tree after some pruning.

The $\textit{degree}$ of a node is the number of non-blank children.
Node $u$ is \emph{fully branching} if $\textit{degree}(u)=2$, \emph{dead} if $\textit{degree}(u)=0,$ and \emph{redundant} if $\textit{degree}(u)=1$.
% If convenient, we may write $\TT(u)$ instead of $\TT.\lambda(u)$.
\vspace{-9pt}
\paragraph{Parity game.} A parity game is a tuple $G = \langle V, E, \alpha, \textit{rank}, v_I\rangle$, where $V$ is the set of vertices, $v_I$ is the initial position, $E \subseteq V \times V$ is the edge relation,
$\alpha: V \to \{0,1\}$ is the partition of~the vertices between \pZ's and \pO's vertices, and ${\it rank}: V \to \mathbb{N}$ is the labelling (or colouring) of~the vertices assuming only finite number of labels.
We assume that every vertex has exactly two successors.\footnote{Notice that this assumption is not very restricting:
if a vertex has only one successor, we can simply clone that successor. This is why in the reminder of this paper, we may define (tree) games with positions
that have only one successor and assume that the definition is correct.}
Moreover, we assume that for every vertex there is specified order on successors.
Smaller of~the two successors of a vertex $v \in G.V$  will be denoted $v0$, greater $v1$.
This assumption infers function $\beta_G: V \times2^{*} \to V$, mapping finite binary sequences to the ends of finite paths in graph $\langle V,E \rangle$.
$\beta_G$ can be~inductively defined as 
\[
\beta_G(v,u) := \left\{
  \begin{array}{l l}
  v                     & \quad \text{if } u = \varepsilon,\\
  \beta_G(v i, w)         & \quad \text{if } u = i w.\\
  \end{array} \right.
\]
We extend $\beta_G$ to function $\beta^{p}_G: V \times2^{*} \to V^{*}$ relating binary sequences to paths originating from a~given vertex.
Formally, $\beta^{p}_G(\langle v, j_1 j_2 \cdots j_n \rangle) := v v_1 v_2 \cdots v_n$, where $\beta(v, j_1 j_2 \cdots j_i) = v_i$, and $i=1,2,\dots,n$.

Players $\pZ$ and $\pO$ play by moving a token, initially positioned in $v_I$, along the edges of~the graph. If the token is in a vertex $v$ such that $\alpha(v) = i$, then
\emph{Player i} chooses the next location of~the token from the set $E(v)$. A play is the path $v_0 v_1 v_2 \dots$ with $v_0 = v_I$ that was taken by~the
token as a result of~the players’ moves.
Since every vertex has at least one successor, every play is infinite.
We say that $\pZ$ wins if sequence $\textit{rank}(v_0) \textit{rank}(v_1) \textit{rank}(v_2) \cdots$ is winning.
\vspace{-9pt}
\paragraph{Tree game.} A (non-stochastic) tree game is a tuple $G = \langle V, E, \lambda, \alpha, v_I, \Phi \rangle$, where $V$ is the set of vertices,
$E \subseteq V \times V$ is the set of~edges, $\lambda: V \to \Gamma \smallsetminus \{\blank\}$ is a labelling of~the vertices and $\alpha: V \to \{0,1, \mathcal{B}\}$ is
a~partition of vertices between \pZ's, \pO's and branching vertices.
Vertex $v_I \in V$ is the initial vertex and $\Phi:\trees{\Gamma} \to [0,1] $ is a~payoff function.
%Vertices of a game are defined as  a subset of natural numbers, because we will need some order defined on the successors.
% Without loss of generality, we assume that every vertex has exactly two successors.
% Moreover, we assume that for every vertex there is specified order on successors. Smaller successor of vertex $v \in G.V$  will be denoted $v0$, greater $v1$.
As in the definition of \emph{parity games}, we assume that every vertex $v \in V$ has exactly two successors.
%The successors of node $v$ will be denoted as $v0,v1,v2,\dots$ accordingly to the order.
By $V_0$ and $V_1$ we denote the set of vertices belonging to $\pZ$ and to  $\pO$, respectively.
Set $V_{\mathcal{B}}$ is the set of branching vertices.

Players $\pZ$ and $\pO$ play by moving a token, initially positioned in $v_I$, along the edges of~the graph.
If the token is in a vertex $v$ such that $\alpha(v) = i \in \{0,1\}$, then \emph{Player i} chooses the next location of~the token from the set $E(v)$.
If the token is in a branching vertex $v$, then it splits into two indistinguishable tokens, positioned in $E(v)$, and we start two concurrent sub-games which continue their execution independently.
Therefore, result of~the players' moves, the \emph{play}, is not a path, but a tree.

An \emph{unfolding} of a game $G$ is a labelled tree $t_G: 2^{*} \to G.V$ such that $t_G(\varepsilon) = G.v_I$ and $t_G(u i) = t_G(u) i$ for all $u \in 2^{*}$, $i \in \{0,1\}$.
Notice that every game has only one unfolding.
A \emph{pre-play} $t: 2^{*} \to G.V \cup \{\blank\}$ is any restriction of~the unfolding of a game $G$ such that every node labelled with a branching vertex is fully branching and the other nodes are redundant.
The redundant nodes depict the moves of~the players.
A \emph{play} is a labelled tree $p \in \mathcal{T}_{\Gamma}$ obtained as relabelling of some pre-play $t$ so that $p(u) = \gamma[t(u) \not = \blank, G.\lambda(t(u)),\blank].$
% n image of a pre-play, i.e, a function $p: 2^{*} \to A$ such that  $p(u) = \gamma(t(u) \not = \blank, G.\lambda(t), \blank)$, where $t$ is a pre-play.
% Let $G$ be a tree game. Deterministic strategy of $\pZ$ (resp. $\pO$) is any partial mapping $\sigma: G.V^{*}G.V \to G.V$
% such that for every finite sequence of vertices $u$ and every vertex $v$, if $\alpha(v) = 0$ (resp. $\alpha(v) = 1$), then $\langle v,\sigma(uv) \rangle \in G.E$.
% The set of all deterministic strategies of $\pZ$ (resp. $\pO$) in game $G$ will be denoted $\Sigma_G$ (resp. $\Pi_G$).

A \emph{deterministic strategy} of $\pZ$ (resp. $\pO$) is a function $\sigma: 2^{*} \to \{0,1\}$.
The set of all deterministic strategies of $\pZ$ (resp. $\pO$) in game $G$ will be denoted $\Sigma_G$ (resp. $\Pi_G$).
We say that a tree $p$ corresponds to a strategy $\sigma \in \Sigma_G$ (resp. $\pi \in \Pi_G$) of game $G$
if $p$ is a play in which $\pZ$ (resp. $\pO$) moves accordingly to $\sigma$ (resp. $\pi$).
Notice that whenever we fix a game $G$ and strategies $\sigma \in \Sigma_G$, $\pi \in \Pi_G$, there is exactly one play that corresponds to both strategies.
We denote this tree as $G(\sigma,\pi)$.
Conversely, for every play $t$ there are strategies $\sigma \in \Sigma_G$, $\pi \in \Pi_G$ such that $t = G(\sigma,\pi)$.

% It is important to notice that this definition is an equivalent of standard definition of a strategy in game, i.e., equivalent of a strategy defined as a function $\bar{\sigma}: G.V^{*} G.V \to G.V$.
It is important to notice that, in our setting, a deterministic strategy is equivalent to a function from $G.V^{+}$ into $G.V$, which is a strategy in the usual sense.
This follows directly from the existence of function $\beta^p_G: G.V \times 2^{*} \to G.V^{*}$ which relates paths in game $G$ originating from vertex $G.v_I$ to binary sequences.
% If the starting vertex is clear from the context, we will omit it.
% The value of play $G(v_0,\sigma,\pi)$ 

Let $\Sigma_0$ be a subset of strategies of $\pZ$ in game $G$ and $\Pi_0$ be a subset of strategies of $\pO$ in game $G$.
We say that game $G$ is \emph{determined} under a profile $\langle \Sigma_0, \Pi_0 \rangle$ if the following two values are equal
\[
  \sup\limits_{\sigma \in \Sigma_0} \inf\limits_{\pi \in \Pi_0} G.\Phi(G(\sigma,\pi)) = \inf\limits_{\pi \in \Pi_0} \sup\limits_{\sigma \in \Sigma_0} G.\Phi(G(\sigma,\pi)).
\]
In that case the unique value is called the {\em value} of game $G$ under profile $\langle \Sigma_0, \Pi_0 \rangle$.
We say that game $G$ is \emph{determined under deterministic strategies} (or simply, \emph{determined}) if is determined under profile $\langle \Sigma, \Pi \rangle$.
% $\Sigma_0 = \Sigma_G$ and $\Pi_0 = \Pi_G$.
%Game will be called determined under a profile $\langle \Sigma_0, \Pi_0 \rangle$ if is determined in every position.

In this paper we will only consider payoff functions defined by indicator functions of regular sets of trees and profiles consisting of deterministic strategies.
\footnote{For any further reference regarding general \emph{tree games}, one can consult \cite{mio_thesis}, chapter 4.}
In a tree game where the payoff function  is an indicator function $\chi_L$, $L$ is called the \emph{wining set}.
% If $\chi_L$ is an indicator function, then $L$ is called the \emph{wining set}.
A \emph{tree game with regular objectives} is any tree game $G = \langle V,E,\lambda, \alpha, v_I, \chi_L\rangle$, where $L$ is a regular tree language.
We say that strategy $\sigma \in \Sigma_G$ (resp. $\pi \in \Pi_G$) is a winning strategy in game $G$, if for every strategy $\pi \in \Pi_G$ (resp. $\sigma \in \Sigma_G$) play $G(\sigma,\pi)$ belongs (resp. does not belong) to the winning set.
\vspace{-9pt}
\paragraph{Nondeterministic tree automaton (NTA)} An NTA is a tuple $\langle {\Gamma}, Q, q_I, \delta, \textit{rank} \rangle$ consisting of a finite alphabet ${\Gamma}$, a finite set of states $Q$,
a~transition function $\delta: Q \times \Gamma \to 2^{Q \times Q}$, and a rank function 
\mbox{$\textit{rank}: Q \to \mathbb{N}$}.
A \emph{run} of an NTA $A$ on tree $t$ is any labelled tree $\rho \in \trees{Q}$ such that $\rho(\varepsilon) = q_I$ and for every node $u \in 2^*$
\[
  \langle \rho(u0), \rho(u1) \rangle \in \delta(\rho(u), t(u)).
\]
A run $\rho$ is accepting, if for every infinite path $p = u_0,u_1, \dots$ where $u_i \in 2^{*}$ and $u_0=\varepsilon$ sequence $n \mapsto \textit{rank}(\rho(u_n))$ is winning.
We say that an automaton is $\mathcal{W}_{i,j}$-automaton if $i = min(\textit{rank}(Q))$
and \mbox{$j = max(\textit{rank}(Q))$}.
The pair $(i,j)$ is called the \emph{(Rabin-Mostowski) index}.
\vspace{-9pt}
\paragraph{Alternating tree automaton (ATA).}
An ATA $A$ is a tuple $A = \langle {\Gamma}, Q, q_I, \delta, \textit{rank} \rangle$, where ${\Gamma},Q,q_I$ and \emph{rank} are as previously and $\delta: Q \times {\Gamma} \to \mathcal{B}^+(\{0,1\} \times Q)$
is the transition function, where $\mathcal{B}^{+}(\{0,1\} \times Q)$ denotes the positive boolean combinations of elements from the set $\{0,1\} \times Q$.
The ATA A accepts tree $t$ if $\pZ$ has a winning strategy in the parity game $G(A,t)$ defined as:
% \vspace{-9pt}
\begin{itemize}\itemsep0em
  \item{
    $G.V := \mathcal{B}^{+}(\{0,1\} \times Q) \times 2^{*}$,
  }
  \item{
    $G.v_I := \langle \delta(q_I, t(\varepsilon)), \varepsilon \rangle.$
  }
  \item{
    If $m = \max_{q \in Q} \textit{rank}(q)$, then $\alpha$, \textit{rank}, and $E$ are defined as follows: for each $\langle \psi, w \rangle \in V$
    \begin{itemize}\itemsep0em
      \item{
        if $\psi = \psi_1 \lor \psi_2$, then $\alpha(\langle \psi, w \rangle) = 0$, $E(\langle \psi, w \rangle) = \{\langle \psi_1, w \rangle,\langle \psi_2, w \rangle \}$, and $\textit{rank}(\langle \psi, w \rangle) = m$,
      }
      \item{
        if $\psi = \psi_1 \land \psi_2$, then $\alpha(\langle \psi, w \rangle) = 1$, $E(\langle \psi, w \rangle) = \{\langle \psi_1, w \rangle,\langle \psi_2, w \rangle \}$, and $\textit{rank}(\langle \psi, w \rangle) = m$,
      }
      \item{
        if $\psi = \langle d,q \rangle$, then $\alpha(\langle \psi, w \rangle) = 0$, $E(\langle \psi, w \rangle) = \{\langle \delta(q,t(wd)), wd \rangle\}$, $\textit{rank}(\langle \psi, w \rangle)~=~\textit{rank}(q)$.
      }
    \end{itemize}
  }
\end{itemize}

ATAs are a natural syntactic extension of NTAs, and define the same class of languages.
For any further reference reader can consult, e.g., \cite{Muller1987267}, \cite{loding:habilitation}.

\emph{Game automata}, defined in \cite{DBLP:conf/fsttcs/DuparcFM11}, are ATA with the transition function $\delta$ restricted in a way that
for every letter $a$ and every state $q$, $\delta(q,a)$ has one of~the four forms: $(0,p), (1,p), (0,p) \land (1,r), (0,p) \lor (1,r)$, for some $p,r \in Q$.
\vspace{-9pt}
\paragraph{Synchronised Deterministic Tree Transducers} (SDTT) can be seen as yet another way to describe families of~regular tree languages.
Every SDTT is a tuple $\mathcal{D} = \langle {\Gamma}, Q, q_I, \alpha, \delta, \lambda \rangle$ 
consisting of a finite alphabet ${\Gamma}$, a finite set of states $Q$, an initial state $q_I$,
a transition function $\delta: Q \times {\Gamma} \to Q \times Q$, a partition of states $\alpha: Q \times {\Gamma} \to \{0,1\}$ and a relabelling $\lambda: Q \to \mathbb{N}$.
As for NTA, a \emph{run} of an SDTT $\mathcal{D}$ on tree $t$ is any labelled tree $\rho \in \trees{Q}$ such that $\rho(\varepsilon) = q_I$ and
$
  \langle \rho(u0), \rho(u1) \rangle = \delta(\rho(u), t(u)),
$ for every node $u$.
Since SDTT are deterministic, every tree $t$ admits exactly one run, denoted $\rho_{\mathcal{D}}(t).$

Slightly abusing the notation, every $\textit{SDTT}$ $\mathcal{D}$ defines a function $\mathcal{D}: \trees{{\Gamma}} \to \trees{\mathcal{D}.\lambda(Q)}$ such that $\mathcal{D}(t) = \mathcal{D}.\lambda(\rho_{\mathcal{D}}(t)).$
% t_\mathcal{D}$ if
% $t_\mathcal{D}(\varepsilon) = q_I$ and for every node $u \in 2^{*}$, if $t_\mathcal{D}(u) = q$ and $t(u)=a$, then $t_\mathcal{D}(u0) = q_0, t_\mathcal{D}(u1) = q_1$, where $\delta(q,a) = \langle q_0,q_1 \rangle$,
% Notice that $t_\mathcal{D}$ can be seen a run of~the non-deterministic automaton.
We say that transducer $\mathcal{D}$ accepts tree $t$ if $\pZ$ has a winning strategy in the parity game induced by $\mathcal{D}$ and $t$, i.e, in game
$G(\mathcal{D},t) = \langle 2^{*}, E, \alpha, \textit{rank}, \varepsilon \rangle$ where $E$ is the child relation, $G(\mathcal{D},t).\alpha(u) = \mathcal{D}.\alpha(\rho_{\mathcal{D}}(t)(u), t(u))$, and
$\textit{rank}(u) = \mathcal{D}(t)(u)$.
The language recognised by an SDTT $\mathcal{D}$ (denoted $L(\mathcal{D})$) is the language of all trees accepted by the transducer.

\begin{proposition}
Class of languages recognised by the SDTTs is exactly the class of languages recognised by the \emph{game automata}.
\end{proposition}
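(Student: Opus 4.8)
The plan is to prove both inclusions by showing that game automata and SDTTs are, modulo straightforward syntactic massaging, the same objects presented in two notations. The key observation is that a game automaton has transitions of exactly four shapes—$(0,p)$, $(1,p)$, $(0,p)\land(1,r)$, $(0,p)\lor(1,r)$—and in all four cases the transition deterministically commits to which state is sent in direction $0$ and which in direction $1$; the only genuine choice left concerns \emph{who owns the node}, i.e. whether the local branching is a $\lor$ (owned by $\pZ$) or a $\land$ (owned by $\pO$). This is precisely the data that an SDTT carries: a deterministic $\delta:Q\times\Gamma\to Q\times Q$ fixing the two successor states, together with an ownership map $\alpha:Q\times\Gamma\to\{0,1\}$ recording whose turn it is, and a colouring $\lambda$. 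So the correspondence is essentially a dictionary between the two formalisms.

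First I would prove the inclusion $L(\mathcal{D})\subseteq$ (game automaton languages). Given an SDTT $\mathcal{D}=\langle\Gamma,Q,q_I,\alpha,\delta,\lambda\rangle$, I construct a game automaton $A$ on the same state set $Q$ with the same initial state and with $\textit{rank}:=\lambda$. For each $q$ and $a$, writing $\delta(q,a)=\langle p,r\rangle$, I set $A.\delta(q,a):=(0,p)\land(1,r)$ if $\mathcal{D}.\alpha(q,a)=1$, and $(0,p)\lor(1,r)$ if $\mathcal{D}.\alpha(q,a)=0$. The point is then that the acceptance game $G(A,t)$ and the game $G(\mathcal{D},t)$ induced by the transducer are isomorphic: both play on $2^{*}$, the ownership of a node $u$ is dictated in both by $\mathcal{D}.\alpha(\rho_{\mathcal{D}}(t)(u),t(u))$, the ranks agree (both equal $\lambda$ applied along the unique run), and the move to each child sends the token to the state prescribed by the deterministic $\delta$. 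Hence $\pZ$ wins one iff she wins the other, so $L(A)=L(\mathcal{D})$.

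For the converse, given a game automaton $A$ with the four-shape restriction, I build an SDTT $\mathcal{D}$ on the same states. The degenerate shapes $(0,p)$ and $(1,p)$ must be normalised so that $\delta$ sends states in \emph{both} directions: I complete them to genuine two-directional transitions, routing the other direction into a harmless absorbing state (an all-accepting sink of even rank for directions the automaton ignores), or, more cleanly, I first observe that trees here are binary and a transition referring to only one child can be padded with a trivially satisfied requirement in the other direction. The two-shape transitions $(0,p)\land(1,r)$ and $(0,p)\lor(1,r)$ translate directly, setting $\delta(q,a):=\langle p,r\rangle$, $\alpha(q,a):=1$ for $\land$ and $\alpha(q,a):=0$ for $\lor$, and $\lambda:=\textit{rank}$. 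Again the two induced parity games coincide node-for-node, giving $L(\mathcal{D})=L(A)$.

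\textbf{The main obstacle} is the treatment of the single-direction transitions $(0,p)$ and $(1,p)$ of game automata, which have no immediate SDTT counterpart because an SDTT is obliged to send a state in \emph{both} directions at every node. The clean fix is to argue that a single-direction transition behaves exactly like a two-direction transition whose ignored branch is sent to a distinguished ``don't care'' state $\top$ with even rank and transitions $\delta(\top,a)=\langle\top,\top\rangle$, $\alpha(\top,a):=0$ (so that the $\top$-subtree is vacuously winning for $\pZ$ and never affects the outcome along any path that actually matters); I must check that adding $\top$ does not alter acceptance, since every infinite path entering $\top$ has $\liminf$ even and hence is winning, while paths that stay in the original states are unaffected. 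Dispatching this padding correctly, and confirming that the winning condition (the $\liminf$ parity condition) is preserved under the game isomorphism in both directions, is the only place requiring real care; the remainder is a transcription between notations.
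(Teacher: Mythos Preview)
Your proposal is correct and follows the same direct syntactic translation as the paper: map $\mathcal{D}.\alpha(q,a)=0$ to $\lor$ and $\mathcal{D}.\alpha(q,a)=1$ to $\land$, with $\delta$ and $\textit{rank}/\lambda$ carried over verbatim. You are in fact more thorough than the paper, which only spells out the SDTT $\to$ game-automaton direction and leaves the converse (in particular the padding of the unary transitions $(0,p)$, $(1,p)$ by an accepting sink, which you handle via $(0,p)\equiv(0,p)\land(1,\top)$) implicit.
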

The translation is simple.
Let $\mathcal{D}$ be an SDTT, and ATA $A$ be its equivalent.
For all $q~\in~\mathcal{D}.Q$ and $a\in\mathcal{D}.\Gamma$, if $\mathcal{D}.\alpha(q,a) = i$ and $\mathcal{D}.\delta(q,a) = \langle q_0,q_1 \rangle$, then
$A.\delta(q,a) = ( 0 ,q_0 ) \diamond ( 1, q_1 )$ where $\diamond~:=~\gamma[i~=~0, \lor, \land]$.

\paragraph{Types.} Given a tree $t$ and node $u \in 2^{*}$, a context $t_{u}$ is a tree obtained from $t$ by removing sub-trees $t.u0$ and $t.u1$.
A grafting of trees $t_1, t_2$ into a context $t_u$ it the tree $t_u[t_1,t_2] = t[u0 := t_1, u1 := t_2].$
If $L$ is a~tree language, the set $t_u^{-1}L = \{ \langle t_1, t_2\rangle : t_u[t_1,t_2] \in L \}$ is called the \emph{$L$-type} of~the context $t_u$.
\vspace{-9pt}
\section{Determinacy under deterministic strategies}
\label{s:main_thm}

\begin{thm}
\label{thm:GA_determinacy}
Every tree game with regular winning set defined by a game automaton is determined under deterministic strategies.
Deciding which player has a winning strategy can be done in \emph{UP}~$\cap$~co-\emph{UP}.
\end{thm}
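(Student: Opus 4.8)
The plan is to reduce the tree game $G$ to an ordinary finite parity game $H$ whose winner coincides with the winner of $G$, and then to invoke the positional determinacy of parity games together with the fact that they are solvable in $\up \cap \coup$. First I would replace the winning set by a concrete acceptor: by the Proposition the language $L$ is recognised by a game automaton, equivalently by an SDTT $\DD$. The advantage of the game automaton is that its acceptance game on a \emph{fixed} tree $p$ is not a genuine tree game but a single token walking down one branch of $p$. Indeed, reading state $q$ at a node labelled $a$, the transition $\delta(q,a)$ has one of the forms $(0,p')$, $(1,p')$, $(0,p')\lor(1,r')$, $(0,p')\land(1,r')$, so the automaton either descends in a forced direction, or lets \pZ{} (at a $\lor$) or \pO{} (at a $\land$) pick it. This is exactly the feature that a general objective lacks and that causes the loss of determinacy: the membership of a grafting $t_u[t_1,t_2]$ in $L$ never couples $t_1$ and $t_2$ beyond a local Boolean combination, so the $L$-type of every context factorises through the two states reached in the children.

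Next I would build the product arena $H$ on vertices $\langle v,q\rangle \in G.V \times Q$. At such a vertex, with $a=\lambda(v)$, I combine the owner $\alpha(v)$ in $G$ with the transition $\delta(q,a)$. If $v$ is a \pZ{}- or \pO{}-vertex, the corresponding player still chooses the continuation of the play; at a branching vertex the two sub-games are spawned and the automaton transition dictates how they are scored: a $\land$-transition requires both children to satisfy their respective states and becomes a \pO{}-controlled choice of which sub-game to audit, a $\lor$-transition becomes a \pZ{}-controlled choice, and a one-directional transition simply discards the untested sub-game. The one point needing care is that the automaton may try to descend into a child that the outer game has pruned to a blank sub-tree; I would handle this by precomputing, for every state $s$, the Boolean value $B(s)$ stating whether \pZ{} wins the automaton's evaluation on the all-$\blank$ tree from $s$, and routing such descents to a win/loss sink accordingly. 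The parity ranks of $H$ are inherited from $\textit{rank}$ on $Q$, and $H$ has size polynomial in $|G.V|\cdot|Q|$.

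The heart of the argument, and the step I expect to be the main obstacle, is the \emph{main lemma}: \pZ{} wins $G$ if and only if \pZ{} wins the parity game $H$, and symmetrically for \pO{}. The delicate direction turns a positional winning strategy in $H$ into a deterministic strategy $\sigma\colon 2^{*}\to\{0,1\}$ in $G$. Since in $H$ only one branch of each branching vertex is pursued, whereas in $G$ both sub-games are actually played to completion, one must verify that the single positional strategy already prescribes winning behaviour in \emph{every} sub-game simultaneously, with no hidden coordination between sibling sub-games. This is precisely where the game-automaton restriction is used: because the acceptance of the two children depends only on the independently reached states $p'$ and $r'$, the strategy may be replayed in each sub-tree in isolation, and an $\land$-vertex is won in $G$ exactly because the $H$-strategy defeats \pO{}'s audit on \emph{either} side. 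The converse direction projects a winning strategy of $G$ onto the branch selected by the opponent in $H$; here one may use an arbitrary strategy and appeal to positional determinacy of $H$ only afterwards.

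Finally, determinacy of $G$ under deterministic, finite-memory strategies drops out. As $H$ is a finite parity game it is determined with positional strategies, so exactly one player wins $H$ and, by the lemma, exactly one wins $G$, the positional $H$-strategy inducing a finite-memory strategy on $G$. For the complexity bound, $H$ is computable in polynomial time, solving parity games lies in $\up \cap \coup$, and this class is closed under polynomial-time many-one reductions; hence deciding which player has a winning strategy in $G$ is in $\up \cap \coup$.
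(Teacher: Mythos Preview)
Your plan is the paper's: derive the theorem from a reduction lemma that builds a parity game $H$ on positions $G.V \times \DD.Q$ (the paper adds a third coordinate in $\{0,1,?\}$ playing exactly the role of your win/loss sinks), route moves that would enter a pruned blank sub-tree according to whether $\blank \in L(\DD[q_I := s])$ (your $B(s)$, the paper's set $Q_{\blank}$), and then invoke positional determinacy of parity games together with Jurdzi\'nski's $\up\cap\coup$ bound.

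There is one gap in the complexity argument. You say that $H$ is computable in polynomial time after precomputing $B(s)$, but each $B(s)$ is itself the value of the parity game $G(\DD[q_I := s],\blank)$ and is therefore not known to be computable in polynomial time; as written, your reduction is not a polynomial-time many-one reduction, so the closure of $\up\cap\coup$ under such reductions does not apply. The paper flags exactly this issue and repairs it without precomputing $Q_{\blank}$: whenever an edge of $H$ would depend on whether $q_{1-d}\in Q_{\blank}$, the moving player is made to \emph{declare} the answer, and the opponent may either accept the declaration and resume $H$, or challenge it by diverting the play into the polynomial-size sub-game $G(\DD[q_I := q_{1-d}],\blank)$, winning outright if the declaration was a lie. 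This folds all the $B(s)$ tests into the arena of $H$ itself, so the reduction is genuinely polynomial and the $\up\cap\coup$ bound follows cleanly.
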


This theorem is an immediate consequence of Lemma \ref{lemma:GA_epsilon_determinacy} (see sect. \ref{s:main_lemma}).
Indeed, Lemma \ref{lemma:GA_epsilon_determinacy} provides an explicit polynomial reduction to \emph{parity games}.
Since \emph{parity games} are determined under positional strategies and since for a~given parity game we can decide in \emph{UP}~$\cap$~co-\emph{UP} which player has a winning strategy
(see, \cite{DBLP:journals/ipl/Jurdzinski98})
%\cite{red_nat}
 the theorem holds.
%
% Notice that since the winning condition of such games can be recognised by a deterministic tree automaton this theorem can be seen as a generalisation of a result mentioned in \emph{Remark 5.1.11} in \cite{mio_thesis} which states that non-stochastic meta parity games are determined under deterministic strategies.

As we promised in the introduction, the rest of this section explains the intuition behind the proof of~the determinacy. %above theorem.
Formalisation of this intuition results in the reduction presented in Lemma \ref{lemma:GA_epsilon_determinacy}.

Tree games with regular objectives can be seen as games that are played in two phases.
First phase creates tree $t \in \trees{G.\lambda(G.V)}$, in a game-like environment.
Second phase checks whether that tree is accepted by an ATA, say $A$.
In other words, it checks whether $\pZ$ wins a game $G(A,t)$ induced by automaton $A$ and tree $t$.
% belongs to some regular language, i.e., checks if $\pZ$ wins a game $G(A,t)$ induced by the automaton recognising the language $A$ and tree $t$.
Since every pre-play is a restriction of~the unfolding of $G$, we have a natural correlation between positions in game $G$ and positions in game $G(A,t).$
In fact, in both games we traverse the arenas in top-down manner and, thus, %can easily correlate positions.
%In other words,
we could try to play those games simultaneously.
Doing so, we would obtain an infinite duration game with a parity condition with positions
of~form $\langle v,u \rangle \in G.V \times 2^{*}$ where $v$~is the node in the unfolding of $G$ and $u$ is the node in $t$.
% Our position would be a pair $\langle v,u \rangle \in G.V \times 2^{*}$ where $v$ is the node in the unfolding of $G$ and $u$ is the node in $t$,
% obtaining an infinite duration game with a parity condition.
% So where does lie the problem?
The problem is that such game may deprive players of some information.
Indeed, in such game players would share some positions, and in those positions they would loose the information of moves of~their adversaries.
This is the reason why in tree games with general regular objectives we loose determinacy under deterministic strategies.
It happens, because there are positions in which players can choose their moves independently and concurrently.
Still, can this situation occur with objectives defined by an SDTT?

As we have stated, intuitively we are troubled in positions $\langle v , u \rangle$ that are shared by both players.
Let's assume that $\pZ$ has control over vertex $v$ and that $\pO$ controls node $u$, i.e.,  $G.\alpha(v) = 0$ and $\mathcal{D}.\alpha(\rho_{\mathcal{D}}(t), t(u)) = 1$.
Can $\pZ$ and $\pO$ choose their moves independently?

Let $q = \rho_{\mathcal{D}}(t)(u)$. The transducer is deterministic, therefore state $q$ is determined by the history, and so is the transition $\mathcal{D}.\delta(q,t(u)) = \langle q_0,q_1 \rangle$.
Since $v$ is not a branching vertex, node $u$ is redundant, i.e., one of its sub-trees is blank.
Let's assume that $\pZ$ chose $vi$ as the next move, then $t(u(1-i)) =  \blank$.
If the tree $\blank \not \in L(\mathcal{D}[q_I := q_{1-i}])$ then $\pO$ will not choose this direction in the second phase, otherwise he would forfeit the game.
Similarly, if $\blank \in L(\mathcal{D}[q_I := q_{1-i}])$, then $\pO$ will assure the victory in the second phase by choosing node $u(1-i)$.
In other words, in this situation moves in the second phase are induced by the moves in the first phase.
Therefore, players cannot choose their moves independently and we infer that they maintain perfect information throughout the game.
% Since reasoning with reversed control over positions is similar, we infer that players can maintain perfect information throughout the game.
\vspace{-9pt}
\section{Game definable languages and $\#$-reductions}
\label{s:reductions}
Unfortunately,  not every game with regular objectives is determined under deterministic strategies. In~Theorem~\ref{prop:nonder_languages} below,
we give a simple criterion implying indeterminacy.
To present examples of indeterminate  games, it is convenient to~extend the definition of payoff function by the concept of
\mbox{$\#$-projection}.
% Unfortunately, not every game with regular winning set is determined under deterministic strategies.
% As established in the sequel (see Theorem~\ref{prop:nonder_languages}), it is easy to define regular tree languages that admit arenas for which neither of~the two players has a winning strategy.
% To define such languages without major technical difficulties, we either need to exclude some unfit tree languages, or we need to slightly relax the definitions.
% We prefer the latter, and do so introducing \emph{game definable} languages and $\#$-projections.
Fortunately, such an approach leads to a slightly stronger result concerning the determinacy.

We say that a tree language $L \subseteq \trees{{\Gamma}}$ is \emph{game definable} if there is a tree game $G$, with a finite set of~vertices, such that
\[ L = L(G) := \{ G(\sigma,\pi) \in \trees{{\Gamma}}: \sigma \in \Sigma_G, \pi \in \Pi_G \}.\]

% that for every tree $t \in \trees{{\Gamma}}$ we have that 
% $t \in L$ if there are strategies $\sigma \in \Sigma_G$ and $\pi \in \Pi_G$ such that $t = G(v,\sigma,\pi)$, i.e., language is \emph{game definable} if it is the set of all \emph{plays} of some \emph{tree game}.

\begin{proposition}
Every \emph{game definable} tree language is recognisable by a $\mathcal{W}_{0,0}$-automaton.
\end{proposition}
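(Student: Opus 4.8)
The plan is to exhibit, for any finite tree game $G$, a nondeterministic tree automaton all of whose states carry rank $0$ and which accepts exactly the plays of $G$. The starting point is the observation that a $\mathcal{W}_{0,0}$-automaton imposes a trivial acceptance condition: if every state has rank $0$, then along every infinite path the sequence of ranks is constantly $0$, whose $\liminf$ is even, so every structurally valid run is accepting. Consequently such an automaton recognises precisely the set of trees on which at least one run exists. Thus it suffices to design the transition function so that the valid runs on a tree $p$ correspond to the pre-plays of $G$ whose relabelling under $G.\lambda$ equals $p$; since $L(G)$ is by definition the set of all such relabellings, this yields $L(A) = L(G)$.

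For the construction I would take the state set $Q := G.V \cup \{q_\blank\}$ with initial state $q_I := G.v_I$, a fresh blank state $q_\blank$, and $\textit{rank} \equiv 0$. The transitions encode the shape of a pre-play directly. At a vertex $v$ with $G.\lambda(v) = a$, reading any letter other than $a$ yields the empty set, so that the automaton reads exactly the labels produced by $G.\lambda$. When $v$ is a branching vertex I set $\delta(v,a) := \{ \langle v0, v1 \rangle \}$, forcing both children to continue the two sub-games; when $v$ belongs to \pZ{} or \pO{} I set $\delta(v,a) := \{ \langle v0, q_\blank \rangle, \langle q_\blank, v1 \rangle \}$, nondeterministically keeping one of the two successors and blanking the other. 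Finally $\delta(q_\blank, \blank) := \{ \langle q_\blank, q_\blank \rangle \}$ and $\delta(q_\blank, a) := \emptyset$ for $a \neq \blank$, so that once a branch becomes blank it stays blank, matching the restriction condition on labelled trees.

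The verification, which I expect to be routine, runs in both directions. Given a run $\rho$ of $A$ on $p$, the map $u \mapsto \gamma[\rho(u) \neq q_\blank, \rho(u), \blank]$ is a restriction of the unfolding of $G$ in which branching nodes are fully branching and all other non-blank nodes are redundant, i.e.\ a pre-play, and $p$ is its relabelling; conversely the pre-play underlying any play yields a valid run by reading off its vertices and placing $q_\blank$ on blank nodes. The one point deserving care is that the partition $\alpha$ plays no role: because $L(G)$ ranges over all strategy pairs, every choice of direction at a non-branching vertex must be permitted, and this is exactly what the two-element transition set above provides, treating \pZ- and \pO-vertices identically. Since all runs are accepting, existence of a run coincides with membership in $L(G)$, and the automaton is a $\mathcal{W}_{0,0}$-automaton because $\min(\textit{rank}(Q)) = \max(\textit{rank}(Q)) = 0$.
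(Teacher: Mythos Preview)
Your proposal is correct and follows essentially the same approach as the paper. The paper gives only a two-sentence sketch (``the automaton will guess a pre-play whose image is $t$; this can be done by a $\mathcal{W}_{0,0}$-automaton because we do not need to confirm the parity condition, only the structure of the tree''), and your construction is precisely the natural way to make that sketch explicit: states are game vertices plus a blank sink, transitions enforce the pre-play shape and the label match, and rank $0$ everywhere makes every run accepting.
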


If a tree language $L$ is game definable, then there exists a tree game $G$ such that $L = L(G)$.
All we need to do is to find an automaton that will accept a tree if and only if the tree is a play in game $G$.
For a~given tree $t$, the automaton will guess a pre-play whose image is $t$.
This can be done by a~$\mathcal{W}_{0,0}$-automaton because we do not need to confirm the parity condition, only the structure of~the tree.

% Since the \emph{Rabin-Mostowski index} hierarchy is strict, above proposition implies that some regular languages are not {game definable}.
Notice that the converse of above proposition is not true.
Indeed, there are languages recognisable by some $\mathcal{W}_{0,0}$-automata that are not {game definable}:
every language that allows different labels on the roots of trees (e.g., $\trees{{\Gamma}},$ for $|\Gamma| > 1$) is not {game definiable}.
Since $\trees{\Gamma}$ is accepted by some game automaton, this implies also that the family of languages recognisable by \emph{game automata} contains languages that are not {game definable}.
On the other hand, not every game definable language is recognisable by a game automata.
In fact, we can show that, in some sense, the structure of {game definable} languages is as rich as the structure of regular languages.
To achieve that, we use $\#$-reductions.

Let $t\in \trees{{\Gamma}}$ be a tree and $\# \not \in {\Gamma}$ be a fresh label, a \emph{$\#$-path} is any, possibly infinite, sequence $u_1, u_2,.., u_n, .. $ of nodes labelled $\#$
such that $u_{i+1}$ is a child of $u_i$ and every node in the sequence is either redundant or dead.
A tree $t_1 \in \trees{\Gamma \cup \{\#\}}$ is a \emph{partial $\#$-reduction} of a tree $t_2 \in \trees{{\Gamma} \cup \{ \# \}}$, denoted $t_1 \preceq_{\#} t_2$, if $p=u_1, u_2, \dots, u_n, \dots$ is a maximal $\#$-path in $t_2$ and
$t_1 = t_2[u_1 := t_3]$ where $t_3 = \blank$ if $p$ is infinite, $t_3 = \blank$ if $p$ is finite and the last node of $p$ is dead, or $t_3 = t_2.u$, if $p$ is finite and $u$ 
is~the non-blank child of~the last node of path $p$.
In other words, $t_1 \in \trees{{\Gamma}\cup \{\#\}}$ is a partial $\#$-reduction of $t_2 \in \trees{{\Gamma}\cup \{\#\}}$ if it is created from tree $t_2$ by collapsing some maximal $\#$-path.
On the other hand, every partial $\#$-reduction $t_1 \preceq_{\#} t_2$ defines in natural way an injection $\tau: 2^{*} \to 2^{*}$, called \emph{$\#$-injection}, that maps nodes of~the tree $t_1$ to
their original positions in tree $t_2$.

% Relation defined by $\preceq_{\#}$ can be extended to a partial order.
It is easy to notice that relation $\preceq_{\#}$ is strongly confluent, which means here that the shape of~the tree which is the result of collapsing
two, or more, maximal $\#$-paths does not depend on the order in which we collapse those paths.
Moreover, the reflexive-transitive closure $\preceq_{\#}^*$ of $\preceq_{\#}$ defines a partial order.
We say that $t_1 \in \trees{{\Gamma} \cup \{ \# \}}$ is an \emph{$\#$-reduction} of $t_2 \in \trees{{\Gamma} \cup \{\#\}}$ if it is the smallest tree such that $t_1\preceq_{\#}^* t_2$.
In other words, we obtain a $\#$-reduction by collapsing every $\#$-path within the original tree.
We say that $\#$-reduction is a \emph{$\#$-projection} if smaller tree has no nodes labelled with $\#$.
It is easy to check that taking a $\#$-projection of a tree $t \in \trees{{\Gamma} \cup \{\#\}}$ is a partial function $\mathcal{P}_{\#}: \trees{{\Gamma} \cup \{\#\}} \to \trees{{\Gamma} \smallsetminus \{\#\}}$
with fixed set $\trees{{\Gamma} \smallsetminus \{\#\}}$.
We extend the notion of $\#$-injection to the injections defined by $\#$-projections.
We also extend the notion of projections to languages of trees, in a non-standard way. We say that $\#$-projection of a tree language $L$ is
undefined if $\mathcal{P}_{\#}(t)$ is undefined for some tree $t \in L$, otherwise it is $\mathcal{P}_{\#}(L)$, the image of set $L$.
An example of $\#$-projections can be found in Fig. \ref{img:reductions}.

Finally, having $\#$-projections we can formulate in what sense the structure of game definable languages is similar to the structure of regular languages.
\begin{figure}
{
  \centering
  \includegraphics[width=120mm]{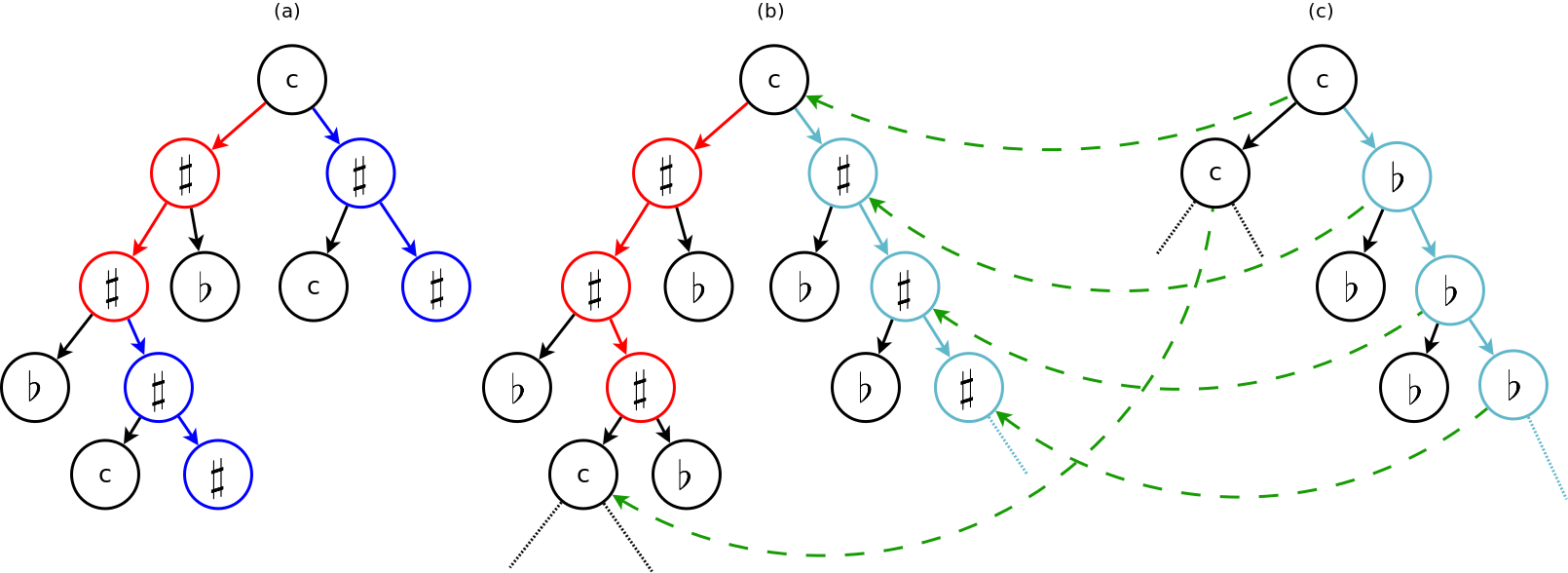}\\
  }
%   \left
\caption{\footnotesize Red nodes belong to some finite $\#$-paths. Light blue nodes in (b) belong to an infinite $\#$-path. Blue nodes are labelled $\#$, but do not belong to any $\#$-path.
Part (a) depicts a tree, that has no $\#$-projection. Tree (c) is a $\#$-projection of tree (b) and the dashed arrows between (c) and (b) describe the associated $\#$-injection.}
\label{img:reductions}

  \vspace{-2ex}
\end{figure}
\begin{lemma}
\label{lemma:definable_regular}
For every non-empty regular tree language $L \subseteq \trees{{\Gamma}}$ recognisable by a $\mathcal{W}_{0,0}$-automaton there are a fresh label $\# \not \in {\Gamma}$ and a \emph{game definable} language $L' \subseteq \trees{{\Gamma} \cup 
\{ \#\}}$ such
that $L = \mathcal{P}_{\#}(L')$.
\end{lemma}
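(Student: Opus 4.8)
\noindent
The plan is to read $L$ off a trivial-acceptance tree automaton and to re-encode that automaton as a tree game, using the fresh label $\#$ for two distinct jobs at once: to hide the automaton's nondeterministic guesses, and to pad the generated plays into the infinite, dead-end-free trees that tree games are forced to produce. The $\#$-projection will then undo exactly this padding. This is, morally, a converse to the previous proposition.

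\noindent
Fix an NTA $\mathcal{A} = \langle \Gamma, Q, q_I, \delta, \textit{rank}\rangle$ recognising $L$ with $\textit{rank} \equiv 0$; since its index is $(0,0)$, a run is accepting precisely when it is a total run, so $t \in L$ iff $\mathcal{A}$ admits a run on $t$. I first isolate two computable sets of states: the \emph{blank-accepting} states, from which $\mathcal{A}$ has an infinite run on the all-blank tree $\blank$ (a greatest-fixpoint/safety computation over the $\blank$-transitions), and the \emph{productive} states, from which at least one tree is generated. By non-emptiness of $L$ the state $q_I$ is productive, and I restrict $\mathcal{A}$ to its productive part so that every retained state admits a legal continuation.

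\noindent
Next I build a finite tree game $G$ over $\Gamma \cup \{\#\}$. For each productive state $q$ I install a \emph{selection gadget}, a (possibly trivial) finite binary tree of $\#$-labelled player vertices through which a sequence of binary choices picks a letter $a$, a transition $\langle p, r\rangle \in \delta(q,a)$, and a branching type --- both children real, only the left real, only the right real, or both blank --- subject to the constraint that the state assigned to a real child is productive and the state assigned to a blank child is blank-accepting. Each committed choice leads to a \emph{branching vertex} labelled $a$ whose two successors are wired as follows: a real direction is routed to the selection gadget of the corresponding successor state, and a blank direction is routed to a single \emph{$\#$-sink} $v_\#$, a $\#$-labelled player vertex both of whose successors are $v_\#$ itself. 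The initial vertex is the selection gadget of $q_I$. Since every play takes both successors of a branching vertex and exactly one successor of a player vertex, $v_\#$ generates an infinite $\#$-path, and every real node of a generated tree is realised by a branching vertex, with all four branching types obtained uniformly by sending zero, one, or two of its directions into $v_\#$.

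\noindent
It then remains to check $\mathcal{P}_{\#}(L(G)) = L$. As $G$ is finite, $L(G)$ is game definable. Every $\#$-vertex of $G$ is a player vertex and hence redundant in any play, so all $\#$-nodes of a play lie on $\#$-paths: the finite selection $\#$-paths end in a real $a$-vertex and collapse to expose it, while each $v_\#$-path is infinite and collapses to $\blank$; in particular no $\#$-node survives, so $\mathcal{P}_{\#}$ is total on $L(G)$. For soundness, a play records a letter and a transition at each real node, i.e.\ a tree $t := \mathcal{P}_{\#}(\text{play})$ together with a partial run of $\mathcal{A}$; the blank-accepting constraint lets me extend this partial run over the collapsed blank subtrees to a total run, whence $t \in L$. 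For completeness, a tree $t \in L$ carries a total run $\rho$, and reading $\rho$ off node by node dictates the player choices of a play whose projection is $t$. The main obstacle is not the automata-theoretic bookkeeping but reconciling two mismatched notions of tree shape: trees in $L$ may have dead nodes and redundant nodes of either orientation, whereas game plays are infinite with branching fixed by vertex type. The idea that removes this obstacle is to realise \emph{every} real node by a branching vertex and to simulate each missing subtree by an infinite $\#$-path into $v_\#$ that the projection erases, so that the orientation of a redundant node and the deadness of a dead node are both fixed by the deterministic wiring of $G$ rather than left to a free choice; the delicate point is verifying that this wiring reproduces exactly the blank pattern prescribed by $\rho$, on the correct sides and nowhere else.
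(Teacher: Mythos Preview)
Your proposal is correct and takes essentially the same approach as the paper: encode runs of the $\mathcal{W}_{0,0}$-automaton as plays of a finite tree game, using $\#$-labelled player vertices to resolve the nondeterministic choice of letter and transition, a branching vertex to carry each real label, and infinite $\#$-paths to stand in for blank subtrees. Your explicit ``branching type'' plus $\#$-sink, governed by the productive/blank-accepting constraints, makes precise what the paper only sketches in the sentence ``whenever we need to use $\blank$, we can generate an infinite $\#$-path, instead''; conversely the paper spells out the binary-search encoding of your selection gadget via shrinking subsets of letters and transition pairs.
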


The idea behind the proof is simple -- to be accepted by a $\mathcal{W}_{0,0}$-automaton it is enough to admit a run.
Game will consists of consecutive guesses of a label and a state that are a part of a run on some tree belonging to the original language.
Auxiliary nodes, required by the guessing, will be labelled with the fresh label $\#$ that indicates redundant nodes which should be ignored by the original automaton.

\begin{proof}
Let $A$ be a non-deterministic tree automaton recognising language $L$.
Let both $A.Q^2$ and $A.{\Gamma}$ be equipped with some total order.
Let $P \subseteq  A.Q \times A.{\Gamma}$ be the set of all productive pairs, i.e., pairs $\langle q, a\rangle$ such that there
exists a tree $t$, with root labelled $a$, belonging to the language $L(A[q_I := q])$.
By $P(q) \subseteq A.{\Gamma}$ we will denote a subset of labels $P(q) = \{a \in A.{\Gamma} : \langle q,a \rangle \in P\}.$

Game $G$ is defined as follows,
$G = \langle V, E, \lambda, \alpha, \Phi \rangle$ where $V :=  (2^{A.Q \times A.Q} \times A.{\Gamma}) \cup  (A.Q \times 2^{A.{\Gamma}}),$
%}
%\item {
%  $
the set of transitions allows to guess some accepting run:
  $$
  E(\langle x,y \rangle) = \left\{
  \begin{array}{l l}
   \{ \langle q_l, P(q_l)\rangle, \langle q_r, P(q_r) \rangle  \} & \quad \text{if }  x = \{\langle q_l, q_r \rangle \}, \\
   \{ \langle x, \{min(y)\} \rangle, \langle x, y \smallsetminus \{min(y)\} \rangle  \} & \quad \text{if }  y \subseteq A.\Gamma \text{ and } |y|> 1,\\
   \{ \langle A.\delta(x,a), a\rangle \} & \quad \text{if }  y = \{a\}, \text{ where } a \in A.\Gamma, \\
   \{ \langle \{min(x)\}, y \rangle, \langle x \smallsetminus \{min(x)\}, y \rangle  \} & \quad \text{if }  x \subseteq A.Q \times A.Q \text{ and } |x|> 1.\\
  \end{array} \right.
  $$
First component begins the selection of a label associated to the state, second continues the selection up to the moment where there is only one label left.
Third component begins the selection of~the proper transition, and the last one is responsible for choosing the correct states belonging to the transition.
% of~the union allow to guess states of some accepting run, where the last three components guess which label should be matched to those states.
The partition of vertices is defined as follows: $\alpha(\langle x,y \rangle):=\gamma[x = \{\langle q_l,q_r \rangle\},\mathcal{B},0].$
%\item {
%   $$
%   \alpha(\langle x,y \rangle) = \left\{
%   \begin{array}{l l}
%    \mathcal{B} & \quad \text{if }  x = \{\langle q_l,q_r \rangle\},\\
%    0         & \quad \text{otherwise}.\\
%   \end{array} \right.
%   $$
%}
The labelling of~the vertices is similar to the partition, assuring that labels from the original tree appear only on the branching vertices:
$\lambda(\langle x,y \rangle) := \gamma[x = \{\langle q_l,q_r \rangle\}, y, \#].$
%\item {
%   $$
%   \lambda(\langle x,y \rangle) = \left\{
%   \begin{array}{l l}
%    y & \quad \text{if }  x = \{\langle q_l,q_r \rangle\}, \\
%   \#         & \quad \text{otherwise}.\\
%   \end{array} \right.
%   $$
%}
%\end{itemize}
Finally, the initial vertex is defined as $v_I := \langle q_I, P(q_I) \rangle.$

Notice that for every play, every node labelled with $\#$ is redundant.
Those nodes are used to guess states and labels, not to contribute to the shape of~the tree.
It is straightforward to see that the language defined by this game is similar to the original language 
in~the sense that, if~we remove redundant nodes labelled with $\#$ we will obtain a tree belonging to the original language.
That is, $L$ is $\#$-projection of $L(G)$.
The are only two problems.
The first problem is the fact that the labelling may use label $\blank$, and it is forbidden.
We solve this problem by noticing that whenever we need to use $\blank$, we can generate an infinite $\#$-path, instead.
The second problem is the fact that some vertices have no successor.
We solve that by noticing that those vertices are not reachable from the initial vertex and, therefore, may be deleted.
\end{proof}

We complete the picture of~the game definable languages by stating that the pre-image of a projection of a regular tree language is regular.

\begin{lemma}
\label{lemma:preimage}
For every label $\# \not \in {\Gamma}$ and regular language $L \subseteq \trees{{\Gamma}}$ the language $L_{\#} = \{ t \in \trees{{\Gamma} \cup \{\#\}} : \mathcal{P}_{\#}(t) \in L \}$ is regular.
Moreover, if $L$ is recognised by a $\mathcal{W}_{i,j}$-automaton, then $L_{\#}$ can be recognised by a $\mathcal{W}_{i,j}$-automaton.
\end{lemma}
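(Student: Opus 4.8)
The plan is to build, from a nondeterministic $\mathcal{W}_{i,j}$-automaton $A$ recognising $L$, a nondeterministic automaton $A_\#$ of the same index that recognises $L_\#$ by \emph{transparently skipping} the nodes contracted by $\mathcal{P}_\#$. The guiding observation is that a node labelled $\#$ contributes to $\mathcal{P}_\#(t)$ only through the position at which its maximal $\#$-path is grafted: a redundant $\#$-node is collapsed and its unique non-blank descendant survives, a dead $\#$-node (and likewise an infinite $\#$-path) is replaced by $\blank$, and a fully branching $\#$-node makes $\mathcal{P}_\#(t)$ undefined. Accordingly $A_\#$ runs exactly like $A$ on $\Gamma$-labelled nodes, while on a $\#$-labelled node it guesses the direction $d$ of the (at most one) non-blank child along which the $\#$-path continues, carries $A$'s current state down that child, and dispatches a dedicated blank-checking state $q_\blank$ to the sibling.

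Concretely, first I would precompute the set $B = \{ q \in A.Q : \blank \in L(A[q_I := q]) \}$ of states from which $A$ accepts the all-blank tree; this is decidable by solving the parity game obtained from $A$ on the single letter $\blank$. The state space of $A_\#$ is $A.Q \cup \bar Q \cup \{ q_\blank \}$, where $\bar Q = \{ \bar q : q \in A.Q \}$ are ``skip'' states. The transitions are $\delta_\#(q,a) = \delta_A(q,a)$ and $\delta_\#(\bar q, a) = \delta_A(q,a)$ for every $a \in \Gamma$ (the skip state resumes $A$'s computation once the $\#$-path ends); $\delta_\#(q, \#) = \delta_\#(\bar q, \#) = \{ \langle \bar q, q_\blank \rangle, \langle q_\blank, \bar q \rangle \}$ (guess the continuation direction); and $\delta_\#(q_\blank, \blank) = \{ \langle q_\blank, q_\blank \rangle \}$ with $\delta_\#(q_\blank, x) = \emptyset$ for $x \neq \blank$, so that $q_\blank$ accepts exactly the blank subtree and rejects as soon as a guessed sibling is non-blank — which is precisely what forces rejection of every $t$ with a fully branching $\#$-node. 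Ranks are inherited, $\textit{rank}(q) = \textit{rank}_A(q)$, and $q_\blank$ is given an even colour in $[i,j]$.

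The delicate point, and the main obstacle, is the acceptance condition on an \emph{infinite} $\#$-path: there $A_\#$ never reaches the grafted $\blank$, yet the projection demands that the branch be accepting iff $A$ accepts the all-blank tree from the carried state $q$, i.e.\ iff $q \in B$. A skip state $\bar q$ persists unchanged along such a path, so its $\liminf$ colour is simply $\textit{rank}(\bar q)$; I would therefore set $\textit{rank}(\bar q)$ to an even colour of $[i,j]$ when $q \in B$ and to an odd colour of $[i,j]$ otherwise. Since all new colours lie in $[i,j]$ and $A$ already realises both $i$ and $j$, the index is preserved (the only subtlety is the degenerate $i = j$ case, where $L$, hence $L_\#$, is trivial and the single colour can be matched directly). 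On finite $\#$-paths the finitely many skip-state colours do not affect the $\liminf$ of a branch, and the dead-ending case is handled uniformly, since $\bar q$ reading the grafted $\blank$ resumes $\delta_A(q,\blank)$ and then runs $A$ on the all-blank subtree, again accepting iff $q \in B$.

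Finally I would verify $L(A_\#) = L_\#$ using the $\#$-injection $\tau$ associated with $\mathcal{P}_\#$. Given an accepting run of $A$ on $s = \mathcal{P}_\#(t)$, I stretch it along each $\#$-path via $\bar q$-states and fill blank siblings with $q_\blank$ to obtain an accepting run of $A_\#$ on $t$; conversely, an accepting run of $A_\#$ on $t$ first forces $\mathcal{P}_\#(t)$ to be defined (no $q_\blank$-branch ever meets a non-blank node), and contracting every $\#$-path to its graft position yields, via $\tau$, an accepting run of $A$ on $s$. Checking that the parity condition transfers in both directions reduces to the $\liminf$ bookkeeping above, completing the proof.
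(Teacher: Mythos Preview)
Your construction is essentially the paper's proof made explicit: the paper's one-sentence sketch says the automaton, upon reading $\#$, ``guesses that either this node belongs to an infinite $\#$-path, and then proceeds expecting a blank sub-tree, or that this node belongs to a finite $\#$-path and ignores it, ensuring that visited node is redundant,'' and your skip states $\bar q$ together with the blank-checker $q_\blank$ implement exactly that, with the rank of $\bar q$ (keyed to membership in $B$) doing the work of the infinite-path guess.

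One point deserves care, however. Your assertion that ``a fully branching $\#$-node makes $\mathcal{P}_\#(t)$ undefined'' is not literally correct under the paper's definition, because $\preceq_\#^*$ is a transitive closure: a $\#$-node that is fully branching in $t$ may become redundant after collapsing $\#$-paths beneath it, and then be collapsed in turn. For instance, take $t(\varepsilon)=\#$, $t(0)=\#$, $t(00)=t(01)=\blank$, and $t.1$ any $\Gamma$-tree; the root is fully branching, yet collapsing the dead path $\{0\}$ makes the root redundant, so $\mathcal{P}_\#(t)=t.1$ is defined. Your automaton rejects this $t$ outright, since $q_\blank$ is sent to a child labelled $\#$ (not $\blank$) and has no transition there. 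The fix is to let $q_\blank$ accept not just the literally blank tree but any tree whose $\#$-projection is $\blank$; concretely, $q_\blank$ should loop on $\#$ as well as on $\blank$ (sending $q_\blank$ to both children), and reject only on letters from $\Gamma\setminus\{\blank\}$. With that change the run on the example above succeeds via the guess $\langle q_\blank,\bar{q_0}\rangle$ at the root. The paper's sketch, with its phrase ``ensuring that visited node is redundant,'' glosses over the same point, so you are in good company; but since you spell out the transitions, the discrepancy becomes visible.
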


Proof is~straightforward, we~modify the original automaton so~when the automaton approaches a~node labelled $\#$, it guesses that either this node belongs to an~infinite $\#$-path, and then proceeds expecting a~blank sub-tree, or
that this node belongs to~a finite $\#$-path and ignores it, ensuring that visited node is redundant.
Since such behaviour requires no parity condition, the index of~the automaton is retained.

Lemma \ref{lemma:preimage} gives us enough power to state the following.

\begin{thm}
\label{prop:nonder_languages}
For every regular language $L$, if there exist a context $t_u$ and four different trees $t_1,t_2,t_3,t_4$,
such that $t_u[t_1,t_3]\in L, t_u[t_2,t_4] \in L$, $t_u[t_2,t_3]\not\in L,$ and $t_u[t_1,t_4] \not\in L$, %, and both  $\{t_1,t_2\}$ and $\{t_3,t_4\}$ are \emph{game definable}
then there is a game $G$ with payoff function $(\chi_L \circ \mathcal{P}_{\varepsilon})$ that is not determined under deterministic strategies.
\end{thm}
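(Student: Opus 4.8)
The plan is to realise a \emph{matching pennies} situation inside a single branching vertex, so that the only meaningful moves of the two players lie in two concurrent, mutually blind sub-games. Concretely, I would build a (possibly infinite-state) tree game $G$ over the alphabet ${\Gamma} \cup \{\#\}$, for a fresh symbol $\#$, with payoff $\chi_L \circ \mathcal{P}_{\#}$, whose plays project under $\mathcal{P}_{\#}$ onto exactly the four grafted trees $t_u[t_i,t_j]$ with $(i,j) \in \{(1,3),(1,4),(2,3),(2,4)\}$; the graft in the left slot ($t_1$ versus $t_2$) is chosen by $\pZ$ and the graft in the right slot ($t_3$ versus $t_4$) by $\pO$.

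For the construction I would first lay down a deterministic \emph{prelude} that rebuilds the context $t_u$ from the root down to the hole $u$, placing a branching vertex at $u$. The fixed structure of $t_u$ (and, later, of each $t_k$) is produced without giving either player a genuine choice: redundant parts use single-successor (cloned) vertices, fully branching parts use branching vertices that spawn the fixed side sub-trees, and blank sub-trees are generated by infinite $\#$-paths, exactly as in the proofs of Lemma~\ref{lemma:definable_regular} and Lemma~\ref{lemma:preimage}; all auxiliary nodes carry the label $\#$ and therefore disappear under $\mathcal{P}_{\#}$. After the branch at $u$, the left sub-game starts at a genuine two-successor $\pZ$-vertex (labelled $\#$) whose two moves lead to the deterministic builders of $t_1$ and of $t_2$; symmetrically, the right sub-game starts at a $\pO$-vertex offering $t_3$ or $t_4$. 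By construction the projection of any play is precisely $t_u[t_i,t_j]$, so $\Phi(G(\sigma,\pi)) = \chi_L(t_u[t_i,t_j])$.

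The determinacy analysis is then a finite case check. The crucial observation is that the two decisions live in the two independent sub-games created by the branching vertex, so a deterministic strategy $\sigma \in \Sigma_G$ fixes $\pZ$'s choice at the left slot as a value $a(\sigma) \in \{1,2\}$ that cannot depend on $\pi$, while $\pi \in \Pi_G$ fixes $\pO$'s choice $b(\pi) \in \{3,4\}$ independently of $\sigma$; this is exactly the loss of information caused by branching. Hence $\Phi(G(\sigma,\pi))$ equals the entry $M(a,b) := \chi_L(t_u[t_a,t_b])$ of a \emph{matching pennies} matrix, and the hypotheses give $M(1,3)=M(2,4)=1$ and $M(1,4)=M(2,3)=0$. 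Since every pair $(a,b)$ is realised by some $\langle\sigma,\pi\rangle$, one computes $\sup_{\sigma}\inf_{\pi}\Phi = 0$ (whatever $\pZ$ commits to, $\pO$ answers with the mismatching slot) while $\inf_{\pi}\sup_{\sigma}\Phi = 1$ ($\pZ$ can always match a fixed $\pO$-choice). As $0 \neq 1$, the game $G$ is not determined under deterministic strategies.

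The main obstacle is the independence argument in the third step: one must verify carefully that no deterministic strategy can make the left choice react to the right one, or \emph{vice versa} --- this is precisely what the concurrency of the sub-games buys us --- and, dually, that the deterministic scaffolding really forbids deviations, which is handled by the cloning convention turning forced moves into single-successor vertices. The encoding of the arbitrary, possibly non-regular and infinite trees $t_u,t_1,\dots,t_4$ into an infinite-state game is routine given the $\#$-projection machinery, but it must be arranged so that these two slots are the only positions where either player has a real decision.
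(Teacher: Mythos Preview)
Your plan is the paper's: encode matching pennies by separating, via the branching vertex at $u$, a $\pZ$-choice between $t_1,t_2$ on the left from a $\pO$-choice between $t_3,t_4$ on the right, and then read off the $2\times 2$ payoff matrix. The paper packages the scaffolding by invoking Lemma~\ref{lemma:definable_regular} to obtain sub-games $G_c,G_l,G_r$ with $\mathcal{P}_\#(L(G_c))=\{t_u[\blank,\blank]\}$, $\mathcal{P}_\#(L(G_l))=\{t_1,t_2\}$, $\mathcal{P}_\#(L(G_r))=\{t_3,t_4\}$ and wires them together; you do the same thing by hand, and your explicit allowance for an infinite arena sidesteps the regularity hypothesis that Lemma~\ref{lemma:definable_regular} would otherwise need for arbitrary $t_u,t_1,\dots,t_4$.

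One technical point in your scaffolding is not quite right. Saying that ``redundant parts use single-successor (cloned) vertices'' does \emph{not} guarantee that every projected play equals $t_u[t_i,t_j]$. If a redundant node of $t_u$ carrying a real label $a\in\Gamma$ is realised by a player vertex whose two successors coincide, then in any pre-play that node is indeed redundant, but its owner still chooses \emph{which} child ($0$ or $1$) is the non-blank one; since $L$ is a language of labelled binary trees it may distinguish left from right, so such a deviation can change membership in $L$ and break the $2\times 2$ analysis (one of the players might be able to escape the matching-pennies matrix altogether). The clean fix --- which is what the construction behind Lemma~\ref{lemma:definable_regular} actually produces --- is to realise every non-$\blank$ node of $t_u$ and of each $t_k$ by a \emph{branching} vertex, routing any missing child into an infinite $\#$-path that $\mathcal{P}_\#$ collapses to~$\blank$; only the two genuine choice nodes are player vertices, and those are labelled~$\#$. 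With this adjustment exactly four projected plays exist and the rest of your argument goes through verbatim.
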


\begin{proof}
Notice that the assumptions allow us to encode game ``matching pennies'' (cf., e.g., \cite{ApGr11}).
Indeed, we use Lemma \ref{lemma:definable_regular} to create games $G_c, G_l, G_r$ such that $\mathcal{P}_{\#}(L(G_c)) = \{t_u[\blank,\blank]\}$,
$\mathcal{P}_{\#}(L(G_l)) = \{t_1, t_2\}$ and $\mathcal{P}_{\#}(L(G_r)) = \{t_3, t_4\}$, respectively.
We can define those games so that $G_c$ has only branching vertices and both $G_l.V_1$  and $G_r.V_0$ are empty.
Finally, we create $G$ by connecting arenas of games $G_c,G_l,G_r$ so that $G.E(v) = \{ G_l.v_I, G_r.v_I\}$, where $v$ denotes the vertex corresponding to node $u$ in game $G_c$ and set $G_c.v_I$
as the initial vertex.

It is easy to show that in game $G$ each of~the players has, essentially, exactly two strategies.
$\pZ$ can choose one of two strategies $\sigma_1$ or $\sigma_2$ where $\sigma_i$ means that $\pZ$ creates tree $t_i$ in game $G_l$.
% Strategies $\sigma_1$ and  $\sigma_2$ where $\sigma_i$ means that $\pZ$ creates tree $t_i$ in game $G_l$ are the only viable $\pZ$'s strategies
Similarly, $\pO$ has strategies $\pi_3,\pi_4$ creating trees $t_3$ and $t_4$, respectively.
Finally, we have that \mbox{$\sigma_1 \prec \pi_4 \prec \sigma_2 \prec \pi_3 \prec \sigma_1$}, where $\sigma \prec \pi$ denotes that
strategy $\pi$ wins against strategy $\sigma$.
\end{proof}

We can see above theorem as a statement saying that, if the regular winning set describes some nontrivial relation between paths in the accepted plays, then one can construct
a two player tree game that is not determined under deterministic strategies.
Of course, since single player games are trivially determined, we need to use both players to achieve the lack of determinacy.
Moreover,

\begin{proposition}
Let $G$ be a single player tree game, i.e, a tree game where $G.V_0$ or $G.V_1$ is empty, with the winning set defined by an NTA A.
If $G.V_1 = \emptyset$ (resp. $G.V_0 = \emptyset$) then the problem of deciding which player has a winning strategy belongs to $\up \cap \coup$ (resp. is $\exptime$-complete).
\end{proposition}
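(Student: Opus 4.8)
My plan is to treat the two cases separately, since they behave very differently.

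\textbf{Case $G.V_1 = \emptyset$.}
Here \pZ\ is the only genuine player; \pO\ has no real choices, but branching vertices still split the play into independent sub-games, so the play is a tree rather than a path. The intuition is that \pZ\ should win if and only if he can build a single play tree $p = G(\sigma,\pi)$ (the choice of $\pi$ is irrelevant) that is accepted by $A$. First I would observe that a deterministic strategy $\sigma$ together with the unfolding of $G$ produces a pre-play, and hence a play $p$, and \pZ\ wins iff $p \in L(A)$. I would then combine this ``tree-building'' game for \pZ\ with the acceptance game $G(A,p)$ of the automaton $A$ on the resulting tree $p$. Because \pO\ has no moves in $G$, the combined object is a one-player-and-automaton parity game: \pZ\ chooses how to expand the play tree, and simultaneously an accepting run of $A$ on that tree must be guessed. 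The key point is that the branching vertices of $G$ do \emph{not} create an information problem, because along each branch the continuation of the run of $A$ depends only on the history of that branch; there is no adversary trying to exploit shared positions. Thus the whole question reduces to: does \pZ\ have a winning strategy in a single parity game whose positions are pairs $\langle v, q\rangle$ with $v$ a vertex of (the unfolding of) $G$ and $q$ a state of $A$, of size polynomial in $|G|\cdot|A|$? Since parity games are decidable in $\up \cap \coup$ by~\cite{DBLP:journals/ipl/Jurdzinski98}, this case lands in $\up \cap \coup$.

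\textbf{Case $G.V_0 = \emptyset$: the $\exptime$ lower and upper bound.}
For the upper bound, the symmetric reduction works but now the ``player'' building the tree is the \emph{adversary} of the automaton: \pO\ tries to force a play $p \notin L(A)$, i.e.\ to make every run of $A$ on $p$ rejecting. Dually, \pZ\ wins iff for \emph{every} \pO-strategy the resulting tree is accepted, which amounts to \pZ\ winning the acceptance game $G(A,p)$ for each $p$ that \pO\ can produce. This is again expressible as a parity game on positions $\langle v, q\rangle$, but the crucial difference is that the automaton $A$ is \emph{nondeterministic}, and here the existential (run-guessing) choices must be resolved against an adversarially chosen tree; determinising or complementing $A$ to obtain a deterministic object that \pZ\ can play against costs an exponential blow-up in the number of states. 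After this single exponential determinisation step the problem reduces to solving one parity game of exponential size, giving the $\exptime$ upper bound. For the lower bound, I would exhibit an $\exptime$-hardness reduction, most naturally from the \emph{non-emptiness/universality style} problems or from the acceptance problem of a linearly-bounded alternating Turing machine: the branching vertices of the single-player tree game let \pO\ spawn independent sub-computations, the fresh adversarial branching simulates the universal (and, via the automaton's nondeterminism, existential) choices of the machine, and the regular objective $A$ encodes the legal-transition and acceptance constraints. The hard part will be this lower bound: making the single-player branching game together with a nondeterministic objective automaton faithfully simulate alternation (the machine's existential choices are realised only through $A$'s nondeterministic guesses, while its universal choices are realised through \pO's branching), and verifying that the encoding is polynomial and that the simulated machine accepts exactly when \pO\ cannot escape $L(A)$.

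The main obstacle, and the place where I would spend most of the care, is precisely this asymmetry between the two cases. When the sole player is on the automaton's side (the $\up\cap\coup$ case) the nondeterminism of $A$ is cooperative and can be folded into the player's own strategy without any blow-up; when the sole player is the automaton's opponent, the nondeterminism becomes adversarial and forces the exponential determinisation, which is also what opens the door to encoding alternation and hence to $\exptime$-hardness. I would therefore structure the proof so that the common parity-game reduction is stated once, and then the two cases are distinguished purely by whether the run of $A$ is guessed by the player (no blow-up, $\up\cap\coup$) or must be certified against the adversary (exponential blow-up, $\exptime$-complete).
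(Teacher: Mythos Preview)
Your argument is correct and tracks the paper's reasoning, but the paper packages it more economically by first observing that the set $L(G)$ of all plays is itself a regular tree language (recognised by a $\mathcal{W}_{0,0}$-automaton). With that in hand, the two cases collapse to standard automata problems: when $V_1=\emptyset$, \pZ\ wins iff $L(G)\cap L(A)\neq\emptyset$, i.e.\ NTA non-emptiness, which is in $\up\cap\coup$; when $V_0=\emptyset$, \pZ\ wins iff $L(G)\subseteq L(A)$, i.e.\ NTA inclusion, which is in $\exptime$. Your explicit product parity games are exactly the emptiness/inclusion games one would build anyway, so the content is the same; the paper just avoids re-deriving them.

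The real divergence is the lower bound. You sketch an encoding of alternating linear-bounded machines, using \pO's branching for universal moves and $A$'s nondeterminism for existential ones. The paper instead takes the first option you mention only in passing: it reduces directly from \emph{universality of a two-letter NTA}. The arena is a fixed four-vertex game in which \pO\ alternately picks a letter $a$ or $b$ (through a $\#$-labelled choice vertex) and then branches, so that after $\#$-projection \pO\ can produce any full binary tree over $\{a,b\}$; the payoff is $\chi_{L(A)}\circ\mathcal{P}_\#$. Then \pZ\ wins iff every such tree is in $L(A)$, i.e.\ iff $A$ is universal. This is a few lines rather than a full machine simulation, and it reuses the $\#$-projection machinery already developed in the paper. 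Your alternating-LBA route would work, but it is substantially more effort for the same bound.
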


The proposition follows from the fact that the tree language $L(G)$ and the winning set L(A) are regular.
If $G.V_1 = \emptyset$, then $\pZ$ wins if $L(G) \cap L(A)$ is not empty.
If $G.V_0 = \emptyset$, then $\pZ$ wins if $L(G) \subseteq  L(A)$.
The non-emptiness of an NTA is in $\up \cap \coup$ and the inclusion can be decided in exponential time, cf. e.g. \cite{loding:habilitation}.
To complete the complexity results we recall that the membership problem (does $t \in L(A) ?$) requires solving a parity game and present the following lemma.

\begin{lemma}
\label{lemma:exptime}
Deciding whether $\pZ$ has a winning strategy in game $G$ with regular winning set recognised by an NTA is $\exptime$-\emph{hard}.
\end{lemma}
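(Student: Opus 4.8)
The plan is to reduce from a problem that is known to be \exptime-hard and that naturally fits the structure of a single-player tree game against a regular acceptance condition. The most direct candidate is the \emph{universality} (or, dually, emptiness of complement) problem for some suitable model, but the cleanest route exploits the observation already made in the preceding proposition: in a single-player game with $G.V_0 = \emptyset$ the deciding player wins exactly when $L(G) \subseteq L(A)$, and language inclusion for tree automata is \exptime-complete. So the concrete goal is to show that the \exptime-hardness of the relevant inclusion/membership problem can be transported into the \emph{membership} phrasing suggested by the last sentence of the excerpt, namely the game-theoretic account of ``does $t \in L(A)$'', by building a tree game whose winning condition forces \pZ\ to simulate an alternating computation.

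First I would fix the source of hardness. The natural one is acceptance of an input by an \emph{alternating} polynomial-space Turing machine, equivalently the word problem for an \exptime\ machine, or—staying inside the tree-automata world the paper already uses—the emptiness of the intersection language $L(G) \cap \overline{L(A)}$, whose \exptime-hardness follows from the standard fact that deciding $L(A_1) \subseteq L(A_2)$ for nondeterministic tree automata is \exptime-complete (see \cite{loding:habilitation}). I would phrase the reduction so that a Turing-machine configuration tree is encoded as a labelled binary tree in $\trees{\Gamma}$, where the branching vertices $V_{\mathcal{B}}$ of the game generate the existential/universal splitting of the alternating computation and \pZ's vertices $V_0$ let her resolve the existential choices and guess successor configurations. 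The NTA $A$ is then the automaton that checks, along every path of the produced play, that the sequence of configuration-encodings is a legal accepting computation: local consistency of successive configurations, correct initial configuration at the root, and reaching an accepting state on every branch. Because $A$ only needs to verify a safety/consistency condition plus a reachability-of-accept condition, its acceptance uses essentially a $\mathcal{W}_{0,1}$-type index, and its size is polynomial in the machine and the input.

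The key steps, in order, are: (i) choose the \exptime-complete problem and its instance $\langle M, w\rangle$ to be simulated; (ii) design the alphabet $\Gamma$ so that each node carries one cell-and-state symbol of a configuration together with control bits marking existential, universal, and move positions; (iii) build the single-player tree game $G$ (with, say, $G.V_1 = \emptyset$) whose branching and \pZ-vertices exactly produce the trees that are candidate accepting computation-trees of $M$ on $w$; (iv) construct the NTA $A$ whose language is the set of \emph{legal} accepting computation-trees, so that a play $G(\sigma,\pi) \in L(A)$ iff the encoded computation is valid and accepting; (v) verify that \pZ\ has a winning strategy in $G$ with winning set $L(A)$ if and only if $M$ accepts $w$, and that $G$ and $A$ are polynomial-size, yielding \exptime-hardness. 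The correctness argument splits into the two standard directions: from an accepting run of $M$ one reads off a winning $\sigma$, and from a winning $\sigma$ one reads off an accepting run, using the branching vertices to recover the universal branching of $M$ and \pZ's choices to recover the existential moves and the guessed successor configurations.

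The main obstacle I anticipate is the faithful simulation of the \emph{transition-consistency} check by a \emph{non}-deterministic (rather than alternating) tree automaton of polynomial size: verifying that configuration $c_{i+1}$ is a legal successor of $c_i$ requires comparing neighbouring cells across consecutive configuration-blocks in the tree, and encoding this comparison so that it is checkable path-wise by an NTA without blowing up the state space is the delicate part. I would handle it by the usual device of interleaving the two configurations cell-by-cell (or by having the automaton guess and then confirm the window of three consecutive cells that justifies each new cell's value), so that the inconsistency is always witnessable on a single path and thus detectable by the parity condition of $A$; the branching structure supplied by $V_{\mathcal{B}}$ is exactly what lets the ``every path'' quantification of the NTA's acceptance coincide with the universal quantification over cells and over universal branches of $M$. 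Once this local-window gadget is in place and shown to be polynomial-size, the equivalence ``$M$ accepts $w$ iff \pZ\ wins'' is routine, and the lemma follows.
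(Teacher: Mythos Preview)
Your proposal contains a genuine gap. You correctly note at the outset that the hard case of the preceding proposition is $G.V_0=\emptyset$ (where \pZ\ wins iff $L(G)\subseteq L(A)$), yet in step~(iii) you build the game with $G.V_1=\emptyset$ and let \pZ\ resolve the existential choices. That is the \emph{easy} side of the dichotomy: with $G.V_1=\emptyset$, \pZ\ wins iff $L(G)\cap L(A)\neq\emptyset$, which the proposition places in $\up\cap\coup$. No \exptime-hardness can arise from that side.

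This is not a slip that can be repaired by swapping players, because the rest of your construction genuinely needs \pZ\ to act (she guesses successor configurations), and the ``main obstacle'' you anticipate is in fact fatal rather than delicate. A polynomial-size NTA cannot recognise the set of \emph{valid} accepting computation trees of an alternating polynomial-space machine: an NTA accepts by \emph{existentially} guessing a run, so it is well suited to detecting an error (guess a position, verify a local inconsistency) but not to certifying the absence of all errors. Your interleaving trick lets the automaton compare $c_i$ with $c_{i+1}$ locally, but then the two copies of $c_{i+1}$ (one paired with $c_i$, one with $c_{i+2}$) must themselves be shown equal, and that is again a universal check over polynomially many cells that a bounded-state nondeterministic device cannot perform along a single path. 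Indeed, if such an NTA existed, your own reduction would place \exptime\ inside $\up\cap\coup$.

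The paper's argument stays on the correct side of the dichotomy and is much shorter. It reduces from universality of a two-letter NTA $A$: a four-vertex arena with $G.V_0=\emptyset$ lets \pO\ pick each label while branching vertices supply the full binary shape, so after $\#$-projection \pO\ can produce every tree in $\trees{\{a,b\}}$. Then \pZ, who has no moves at all, wins iff every such tree lies in $L(A)$, i.e., iff $A$ is universal. No encoding of machine computations is needed; the \exptime-hardness is inherited directly from NTA universality.
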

To prove this lemma we reduce the universality of a two letter NTA $A$.

\begin{proof}
\begin{wrapfigure}[10]{l}{25mm}
  \label{img:simple_arena}
  \vspace{-2ex}
  \centering
  \includegraphics[height=33mm]{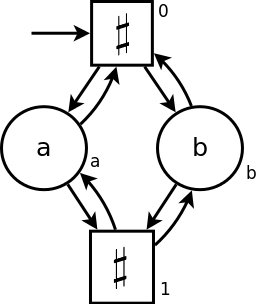}
  \caption{Arena}
  \vspace{-2ex}
\end{wrapfigure}
Let $L_{\#}$ be the pre-image of $L(A)$, as defined in Lemma \ref{lemma:preimage}, i.e., $L_\# = \{ t \in \trees{{\Gamma} \cup \{\#\}} : \mathcal{P}_{\#}(t) \in L(A) \}$, where ${\Gamma} = \{a,b\}$.
Game $G = \langle V, E, \lambda, \alpha, v_I, \Phi \rangle $ is defined as in Fig.~\ref{img:simple_arena}.
Formally, the arena has four vertices, $V = \{0,1,a,b\}$, and 8 edges: $E(0) = E(1) = \{a,b\}, E(a)= E(b) = \{0,1\}$ with $v_I = 0$.
The labelling is defined as $\lambda(a) = a, \lambda(b) = b, \lambda(0) =\lambda(1) = \#$ and the partition as $\alpha(0)=\alpha(1)=1, \alpha(a)=\alpha(b)=\mathcal{B}$.
The payoff function is the indicator function of~the regular tree language $L_{\#}$.

It is easy to check that $\pO$ has a winning strategy if and only if $L \subsetneq \trees{\{a,b\}}$ and since single player games are determined under deterministic strategies,
$\pZ$ has a winning strategy if and only if $L = \trees{\{a,b\}}$.
\end{proof}
\vspace{-9pt}
\section{Reduction to parity games}
\label{s:main_lemma}

Tree games can be seen as a certain extension of \emph{games on graphs}.
When we consider automata based objectives rather than those given by the \emph{parity condition} we loose the positional determinacy,
but, due to the fact that the \emph{parity automata} on infinite words can be determinized, we keep the determinacy under deterministic strategies.
In the case of tree games, as we have shown, for some regular winning sets we cannot guarantee that games are determined under deterministic strategies.
Yet, for game automata  -- a natural subclass of non-deterministic tree automata (cf. \cite{DBLP:conf/fsttcs/DuparcFM11}) -- we retain the determinacy.
Moreover, deciding which player has a winning strategy is not harder than in the case of parity games.

To prove that, we use the idea explained in Section \ref{s:main_thm}, which results in the following reduction of tree games with regular objectives defined by a game automata to parity games.

\begin{lemma}
\label{lemma:GA_epsilon_determinacy}
Let $G$ be a tree game.
If $G.\lambda^{-1}(\#) \subseteq V_0 \cup V_1$ and $G.\Phi = \chi_L \circ \mathcal{P}_{\#}$ for some tree language $L \subseteq \trees{{\Gamma}}$ recognisable by a \emph{game automaton},
then there exists \emph{parity game} $H$ such that $\pZ$ (resp. $\pO$) has a winning strategy in game $G$ 
if and only if $\pZ$ (resp. $\pO$) has a winning strategy in~game $H$.
The size of $H$ is polynomial with respect to the size of~the automaton recognising $L$ and to the size of~the original game.
Moreover, if $G.V$ is finite then game $H$ can be constructed in polynomial time.%, using an $\up \cap \coup$ oracle.
\end{lemma}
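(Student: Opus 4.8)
The plan is to make the intuition from Section~\ref{s:main_thm} fully precise by building the arena of $H$ as a synchronised product of the unfolding of $G$ with the run of the game automaton $\AA$ (equivalently, its SDTT presentation $\DD$ via Proposition~1) on the tree being constructed. A position of $H$ will be a pair $\langle v, q \rangle$, where $v \in G.V$ ranges over vertices of~the game and $q \in \AA.Q$ records the state the automaton has reached along the current branch. Crucially, because $\DD$ is deterministic and $G.\lambda^{-1}(\#) \subseteq V_0 \cup V_1$, the automaton never branches at a redundant (pennant-$\#$) node; so the second-phase state $q$ is a \emph{function} of~the first-phase history, and the two phases can be played in lockstep on a single token. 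This is exactly what removes the loss of information discussed in Section~\ref{s:main_thm}, and it is why the construction yields a genuine parity game rather than a game of imperfect information.

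First I would fix notation: let $\DD$ be the SDTT equivalent to the game automaton for $L$, and recall that at each node the transition $\DD.\delta(q,a) = \langle q_0, q_1 \rangle$ together with $\DD.\alpha(q,a) \in \{0,1\}$ tells us which player controls the corresponding disjunction/conjunction in the acceptance game $G(\AA,t)$. I would then define the ownership of a product position $\langle v, q \rangle$ by cases on $G.\alpha(v)$. When $v$ is an ordinary $\pZ$- or $\pO$-vertex, the owner in $G$ picks a direction $i \in \{0,1\}$; the automaton state then advances deterministically to $q_i$, and the \emph{other} subtree is blank, so — exactly as argued in Section~\ref{s:main_thm} — whether that blank subtree is accepting is settled by the fixed test $\blank \in L(\DD[q_I := q_{1-i}])$, which I would precompute and fold into the rank/winning data rather than leaving as a live choice. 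When $v$ is a branching vertex, both children are live, so the acceptance game faces a conjunction-or-disjunction $\DD.\alpha(q,\lambda(v))$; here ownership of the product position is handed to the player dictated by $\DD.\alpha$, and the two successors $\langle v0, q_0\rangle$, $\langle v1, q_1\rangle$ are generated. The rank of $\langle v,q\rangle$ is inherited from $\AA.\textit{rank}(q)$ (taking care that $\#$-labelled redundant nodes, which $\mathcal{P}_{\#}$ collapses, are given a neutral rank so they do not disturb the $\liminf$ parity condition — this is the analogue of Lemma~\ref{lemma:preimage}, where collapsing $\#$-paths preserves the index).

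Having built $H$, the core of~the proof is the equivalence of winning strategies. I would argue it by a strategy-transfer in both directions. Given a winning deterministic strategy for, say, $\pZ$ in $G$, the synchronisation described above converts each of her moves in the first phase into a forced move of~the automaton token in $H$; conversely, the precomputed blank-acceptance tests guarantee that the dropped subtrees are resolved in $\pZ$'s favour exactly when they were under $\chi_L \circ \mathcal{P}_{\#}$. The key invariant to maintain is that along every branch of~the play-tree $G(\sigma,\pi)$, the sequence of product states visited in $H$ coincides with the automaton run $\rho_{\DD}(\mathcal{P}_{\#}(G(\sigma,\pi)))$ on that branch; then a path in $H$ is parity-winning iff the corresponding branch of~the play is accepted, and $\pZ$ wins $H$ iff the play lies in $L$. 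Since the product has size $|G.V| \cdot |\AA.Q|$ and the blank-acceptance tests are emptiness queries computable in the stated bound, $H$ is polynomial in both inputs and, when $G.V$ is finite, constructible in polynomial time.

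The main obstacle I expect is the bookkeeping around branching vertices together with the $\#$-collapse. At a branching vertex the second phase genuinely splits, so I must verify that handing ownership of~the product position to the player named by $\DD.\alpha(q,\lambda(v))$ faithfully mirrors the alternation of the acceptance game $G(\AA, t)$ \emph{and} is compatible with $G$'s own branching semantics (where the split is not a choice but an automatic fork). The subtle point is that in $G$ a branching vertex belongs to neither player, whereas in the product it must be owned by whichever player controls the corresponding automaton connective; I would need to check that this reassignment is sound precisely because game automata restrict each transition to a single connective over the two directions $(0,\cdot)$ and $(1,\cdot)$, so there is no ambiguity about which child each disjunct/conjunct refers to. Handling the interaction of this reassignment with infinite $\#$-paths (which $\mathcal{P}_{\#}$ sends to $\blank$, cf.\ Section~\ref{s:reductions}) is where the neutral-rank choice must be made carefully so that an infinite redundant path neither spuriously wins nor spuriously loses in $H$.
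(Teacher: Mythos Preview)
Your high-level construction is the same as the paper's: a synchronised product of $G$ with the SDTT $\DD$, with branching vertices re-owned by the player named by $\DD.\alpha$, and the ``blank subtree'' outcome at non-branching vertices resolved by the test $q_{1-i}\in Q_{\blank}:=\{q:\blank\in L(\DD[q_I:=q])\}$. Where you diverge from the paper is in how this test is incorporated, and that divergence creates a genuine gap in the polynomial-time claim.

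You write that the blank-acceptance tests are ``emptiness queries computable in the stated bound'' and therefore can be precomputed. But deciding $\blank\in L(\DD[q_I:=q])$ is exactly solving the parity game $G(\DD[q_I:=q],\blank)$; this is in $\up\cap\coup$ but not known to be in polynomial time. The paper makes this explicit and works around it: rather than precomputing $Q_{\blank}$, it adds a third component $x\in\{0,1,?\}$ to every position and, whenever the successor depends on membership in $Q_{\blank}$, inserts a small claim-and-challenge sub-game in which the moving player asserts whether $q_{1-d}\in Q_{\blank}$ and the opponent may either accept or divert the token into the parity game $G(\DD[q_I:=q_{1-d}],\blank)$ to refute the claim. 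This keeps the construction polynomial without solving any parity games up front. Your proposal, as written, does not achieve the ``constructed in polynomial time'' clause of the lemma.

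Two related points. First, ``folding the test into the rank/winning data'' is doing real work that you have not specified; the paper's third component $x$ is precisely that bookkeeping (once $x\in\{0,1\}$ the rank is frozen at $2m$ or $2m+1$), and you will need something equivalent. Second, the rank on $\#$-labelled positions cannot be neutral in general: an \emph{infinite} $\#$-path projects to $\blank$, so along such a path the rank must already encode whether the current state accepts $\blank$. The paper sets it to $2m$ if $q\in Q_{\blank}$ and $2m+1$ otherwise --- again a dependence on $Q_{\blank}$ that feeds back into the polynomial-time issue above.
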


We will prove this lemma in three steps.
First, we will construct game $H$ that is equivalent to $G$, is polynomial in size, but for which the cost of~the construction may by exponential.
Then we will explain how to modify game $H$ to acquire polynomial reduction.
Finally, we will prove that game $H$ is equivalent to game $G$.

\begin{proof}
Notice that for every play of game $G$, nodes labelled with $\#$ are redundant. Therefore the projection of~the language $L(G)$ is well defined.
% Moreover, functions $\beta_G$ and $\beta_H$ allows us to think about strategies as of functions $\sigma: 2^{*} \to \{0,1\}$ instead of $\sigma: V^{*} V \to V$.
Let $\mathcal{D}$ be an SDTT recognising language $L$.
The set $Q_{\blank} \subseteq \mathcal{D}.Q$ is the subset $\{ q \in \mathcal{D}.Q :\blank \in L(\mathcal{D}[q_I :=q]) \}$, and $m$ is the maximal rank used by $\mathcal{D}$.
Game $H$ is defined as follows:
\begin{itemize}\itemsep0em
  \item $H.V = G.V \times \mathcal{D}.Q \times \{ 0,1, ? \}$ \quad where ``?'' is an additional symbol;
%   \item $H.E = \{(\langle v_1, q_1\rangle,\langle v_2, q_2\rangle),(\langle v_1, q_1\rangle,\langle v_3, q_3\rangle) | 
%           \langle v_1, v_2 \rangle, \langle v_1, v_3 \rangle \in E.V, \mathcal{D}.\delta(q_1, G.\lambda(v_1)) = \langle q_2, q_3 \rangle\}$
  \item with $\mathcal{D}.\delta(q, G.\lambda(v)) = \langle q_0, q_1 \rangle$ and $d \in \{0,1\}$\\
  $H.E(\langle v, q, x \rangle) = \quad \quad
  \left\{ 
  \begin{array}{l l}
  \{\langle v0, q, x\rangle, \langle v1, q, x\rangle \}  & \quad \text{ if } G.\lambda(v) = \#, \\
  \{\langle v0, q_0, x\rangle, \langle v1, q_1, x\rangle \}  & \quad \text{ if } G.\lambda(v) \not = \# \text{ and } x \not = ?, \\
%   \{\langle v0, q_0, \gamma[q_1 \not \in Q_{\blank},?,0]\rangle, \langle v1, q_1, \gamma(q_0 \not \in Q_{\blank},?,0)\rangle \},  & \quad \text{ if } G.\lambda(v) \not = \#, G.\alpha(v) = 1,\\
  \{\langle vd, q_d, \gamma[q_{1-d} \not \in Q_{\blank},?,0]\rangle\}                                                         & \quad \text{ if } G.\lambda(v) \not = \#, G.\alpha(v) = 1,\\
                                                                                                                                  & \quad \quad q \in Q_0, \text{ and } x = ?, \\  
%   \{\langle v0, q_0, \gamma(q_1 \in Q_{\blank},?,1)\rangle, \langle v1, q_1, \gamma(q_0 \in Q_{\blank},?,1)\rangle \},  & \quad \text{ if } G.\lambda(v) \not = \#, G.\alpha(v) = 0,\\  
    \{\langle vd, q_d, \gamma[q_{1-d} \in Q_{\blank},?,1]\rangle\}                                                                & \quad \text{ if } G.\lambda(v) \not = \#, G.\alpha(v) = 0,\\
                                                                                                                                  & \quad \quad q \in Q_1, \text{ and } x = ?, \\  
  \{\langle v0, q_0, x \rangle, \langle v1, q_1, x \rangle \}  & \quad \text{ otherwise;}
  \end{array} \right.
  $
%   \item $H.\lambda(\langle v, q\rangle) = j$, where $\mathcal{D}.\lambda(q) = \langle i,j \rangle$
  \item
  $
    H.\textit{rank}(\langle v, q, x \rangle) = \quad\left\{
  \begin{array}{l l}
  2m   & \quad \text{if } x = 0,\\
  2m+1 & \quad \text{if } x = 1,\\
  2m   & \quad \text{ if } G.\lambda(v) = \#, x = ?,\text{ and } q \in Q_{\blank}\\
  2m+1 & \quad \text{ if } G.\lambda(v) = \#, x = ?, \text{ and } q \not \in Q_{\blank}\\
  \mathcal{D}.\lambda(q)    & \quad \text{ otherwise;}
  \end{array} \right.
  $
  \item $H.\alpha(\langle v, q, x\rangle) = \left\{ 
  \begin{array}{l l}
  \mathcal{D}.\alpha(q, G.\lambda(v))  & \quad \text{ if } G.\alpha(v) = \mathcal{B}\\
  G.\alpha(v)  & \quad \text{ otherwise;}
  \end{array} \right.$
  \item $H.v_I = \langle G.v_I, \mathcal{D}.q_I, ? \rangle$
\end{itemize}

Intuitively, for $\langle v, q, x \rangle \in H.V$ first component stores the information where in the game $G$ we are,
the second component stores the information in which state of~the transducer we are, and the last component holds the information
whether the game is, sill, undecided ($x = ?$) or whether, regardless of~the future moves, \textit{Player i} won the game ($x = i,$ for $i \in \{0,1\}$).

Before we will prove the equivalence, let's consider the cost of~the reduction.
The resulting game has no more than $3 |G.V| |\mathcal{D}.Q|$ vertices and is defined by a polynomial set of equations.
The only problem is that we need to compute the set $Q_{\blank}$.
This requires solving a parity game for every state $q \in \mathcal{D}.Q$ and, unfortunately, may require exponential time.
To avoid that we modify game~$H$ % slightly.
 using a standard technique:
whenever the token is in position $\langle v, q, ? \rangle \in H.V$ whose successors are determined by the set $Q_{\blank}$ and \textit{Player i} decides to move to position
\mbox{$\langle vd,q_d, \gamma[q_{1-d} \not \in Q_{\blank},?,1-i]\rangle$},
we enter a mini-game to decide whether $q_{1-d} \in Q_{\blank}$ (see the definition of \textit{H.E}):
\textit{Player i} states whether $q_{1-d} \in Q_{\blank}$ and, after that, \textit{Player (1-i)} can either agree -- resuming game $H$ -- or can try to disagree and then
has to play game $G(\mathcal{D}[q_I = q_{1-d}],\blank)$ to prove that his/her opponent cheated.
Since the cost of implementing such mini-game is polynomial, the whole reduction is polynomial.

Now we can proceed with the proof of~the equivalence.
Since the definitions are dual, it is enough to prove the following claim.
If $\pZ$ has a winning strategy in game $H$, he/she has a winning strategy in game $G$.
Proof of~the appropriate claim for the $\pO$ is almost identical.

Let's assume that $\pZ$ has no winning strategy in game $G$, we will show that $\pZ$ has no winning strategy in game $H$.
Let $\sigma_H: 2^{*} \to \{0,1\}$ be some arbitrary strategy in game $H$ and $T_{\sigma_H}$ be the restriction of~the unfolding $T_H$ of game $H$ that is consistent with strategy $\sigma_H$.
To show that $\sigma_H$ is not a winning strategy, we need to show that there is a sequence $s_l = i_1i_2 \cdots i_n \cdots$ consistent with strategy $\sigma_H$ in game $H$ such that
sequence $H.\textit{rank}(T_{\sigma_H}(\varepsilon)) H.\textit{rank}( T_{\sigma_H}(i_1)) H.\textit{rank}( T_{\sigma_H}(i_1i_2)) \cdots H.\textit{rank}(T_{\sigma_H}(i_1i_2 \cdots i_n \cdots)) \cdots$ is loosing.
Sequence is consistent with strategy $f$ of \textit{Player i} if for every $k \in \mathbb{N}$,  $f(i_1i_2\cdots i_k) = i_{k+1}$ in the positions belonging to \textit{Player i}.
In the case of sequence $s_l$ and strategy $\sigma_H$ this translates to the statement that for every $k \in \mathbb{N}$, $T_{\sigma_H}(i_1\cdots i_k) \not = \blank$.

$\pZ$ has no winning strategy in game $G$, therefore there is  a strategy $\pi_G$ such that tree $t = \mathcal{P}_\#(G(\sigma_H, \pi_G))$ does not belong to language $L(\mathcal{D})$.
That is, for every strategy $\sigma: \Sigma_{G(\mathcal{D},t)}$ there exists sequence $s = i_1 i_2 \cdots i_n \cdots$, consistent with $\sigma$, such that sequence
$l \mapsto  \mathcal{D}(t)(i_1\cdots i_l)$ is loosing.

Let $\tau: 2^{*} \to 2^{*}$ be the injection related to $\#$-projection $t = \mathcal{P}_{\#}(T_G)$, where $T_G = G(\sigma_H, \pi_G)$.
We define $\sigma: 2^{*} \to \{0,1\}$ as a function satisfying the following conditions.
For both branching and \pZ\textit{'s} vertices of game $G$ that are mapped to vertices of $\pZ$ in game $G(\mathcal{D},t)$, $\sigma$ agrees with $\sigma_H$, i.e, 
for any $u \in 2^{*}$ such that $\mathcal{D}.\alpha(\rho_{\mathcal{D}}(t)(u), t(u)) = 0$ if $T_H (\tau(u)) = \langle q,v,x \rangle$, where $G.\alpha(v) \not = 1$, then $\sigma(u) = \sigma(\tau(u))$.

For \pO\textit{'s} vertices of game $G$ mapped to vertices of $\pZ$ in game $G(\mathcal{D},t)$ we demand that
\begin{equation}
\label{cond:sigma}
\sigma(u) = \left\{ 
  \begin{array}{l l}
  \gamma[q_0 \in Q_{\blank},0,1]  & \quad \text{ if } t(u0) = \blank,\\
  \gamma[q_1 \in Q_{\blank},1,0]  & \quad \text{ if } t(u0) \not = \blank, t(u1)  = \blank\\
  \end{array} \right.
\end{equation}
This equation will be called \emph{restriction} (\ref{cond:sigma}).
Intuitively, we demand that, whenever possible, $\sigma$ agrees with $\sigma_H$, and if the behaviour of $\sigma$ cannot be deduced from $\sigma_H$, then it behaves reasonably.

Sequence $s$ and function $\tau$ define sequence $s_2 = i_1 j_2 \cdots j_k \cdots$ such that for every $n \in \mathbb{N}$ there is a natural number $k \in \mathbb{N}$ such that
$\tau(i_1\cdots i_n) = j_1 \cdots j_k$.
Sequence $s_2$ is a witness of a loosing sequence in tree $T_{\sigma_H}$.
Before we prove it, let's notice that deterministic nature of both the unfolding and the transducer infer that for every $u \in 2^{*}$, if $T_{\sigma_H}(\tau(u)) = \langle v,q,x \rangle$,
then $\rho_{\mathcal{D}}(t)(u) = q$.

The reminder of~the proof is technical, and consists of resolving all possible instances.
Those cases depend on the vertices of game $H$ that are found on the path defined by sequence $s_2$.

Let's assume that $s_2$ is not consistent with $\sigma_H$ in game $H$, i.e., there is the smallest $k \in \mathbb{N}$ such that $T_{\sigma_H}(i_1\cdots i_{k+1}) = \blank$.
This can happen only if $T_{\sigma_H}(i_1\cdots i_k) = \langle v,q,x \rangle$, where $G.\alpha(v) = 0$ and $\mathcal{D}.\alpha(q) = 1$.
Otherwise, either $i_1\cdots i_k$ is fully branching or $s_2$ agrees with $\sigma_H$.
First, let's take care of~the $\#$ case, i.e, let's assume that $G.\lambda(v) = \#$.
With this assumption, we have that in tree $G(\sigma_H,\pi_G)$ node $i_1\cdots i_k$ belongs to an infinite $\#$-path. Otherwise the node would be deleted by the projection and $\pO$ would not be able to disagree with $\sigma_H$.
If this $\#$-path starts at the root of~the tree, then the language $L(\mathcal{D})$ does not contain a blank tree, thus $q_I \not \in Q_{\blank}$
and this infinite $\#$-path defines a loosing sequence.
If this $\#$-path does not begin at the initial vertex of game $H$ (equivalently, at the root of $T_{\sigma_H}$), 
then let $u$ be the parent of~the start of this path and $ui \sqsubseteq i_1\cdots i_k$ be it's child.
Let, $\langle w,p,y \rangle := T_{\sigma_H}(u) $ and $T_{\sigma_H}(ui) = \langle wi,q,x \rangle$.
The $\#$-path is infinite, therefore $\tau^{-1}(ui)$ is defined  and $t(\tau^{-1}(ui)) = \blank$.
Moreover, since we have a loosing sequence $s$ we have that $q \not \in Q_{\blank}$.
Depending on the value of $y$, we have three possible cases to consider.
If $y=1$, then, of course, we have a loosing sequence.
If $y = ?$, then we have a similar case as when the $\#$-path was starting at the root and the same argument is valid.
We are left with the last sub-case, $y=0$.
If $y=0$, then there is an ancestor $u_a =j_1j_2\cdots j_l \in 2^{*}$ of node $u$ labelled with $\langle w_a,p_a,? \rangle$, where $G.\alpha(w_a) = 1$ and $\mathcal{D}.\alpha(p_a) = 0$.
Moreover, one of~the sub-trees of $u_a$ is blank, let it be $i$, i.e., $j_{l+1} = (1-i)$, $T_{\sigma_H}(u_a \cdot i) = \blank$, and $T_{\sigma_H}(u_a (1-i)) = \langle w_a(1-i), p_{1-i}, 0\rangle$.
Additionally, if $\mathcal{D}.\delta(G.\lambda(w_a),p_a) = \langle p_0, p_1 \rangle$, then $p_i \in Q_{\blank}$. 
We have two cases, either $i=0$ and we broke restriction (\ref{cond:sigma}), because $p_0 \in Q_{\blank}$, or $i=1$.
If $i=1$, then either there is node $o$ such that $t(o0) \not = \blank$, $\tau(o) = u_a$ and, thus, we broke the restriction (\ref{cond:sigma}) or 
$t(o0)= \blank$ and since $s$ is loosing in $G(\mathcal{D},t)$, then $p_0 \not \in Q_{\blank}$ and we broke the restriction (\ref{cond:sigma}) again.

% and we  have an infinite $\#$-path in tree $G(\sigma_H, \pi_G)$ starting in node $u_a0$.
% Moreover, since $s$ is loosing in,, thus this infinite path is loosing and that ends this series of sub-cases.

Now we can assume that $G.\lambda(v) \not = \#$.
Without loss of generality, we can assume that $i_{k+1} = 0$. Let $\mathcal{D}.\delta(q, G.\lambda(v)) = \langle q_0, q_1 \rangle$.
Notice that $q_0 \not \in Q_{\blank}$, otherwise the sequence $s$ would not be a loosing sequence in tree $t$ because node $j_1j_2\cdots j_n$ such that $i_1i_2\cdots i_k = \tau(j_1j_2\cdots j_n)$ would accept blank tree as a left sub-tree.
Again, we have three possibilities.
If $x = 1$, then the definition of game $H$ implies that there exists sequence with prefix $i_1i_2 \cdots i_k \cdot 1$ that is loosing.
If $x = ?$, then the non-blank successor of $i_1\cdots i_k$ is labelled $\langle v1,q_1,1 \rangle$ and, similarly, we have a loosing sequence.
Last case assumes that $x = 0$.
This implies that there is an ancestor $u \in 2^{*}$ of node $i_1\cdots i_k$ labelled with $\langle w,p,? \rangle$, where $G.\alpha(w) = 1$ and $\mathcal{D}.\alpha(p) = 0$.
This situation requires similar argument to one used in the previous paragraph.
This ends the part of~the proof, where our sequence disagrees with strategy $\sigma_H$.

Now we assume that $s_2$ is consistent with $\sigma_H$ in $H$, i.e., for all $k \in \mathbb{N}$ we have that \mbox{$T_{\sigma_H}(i_1\cdots i_{k+1}) \not = \blank$}.
Again, we have three possibilities.
If there is node $u = j_1j_2 \cdots j_k$ such that $T_{\sigma_H}(u) = \langle v,q, 1\rangle$ we obtain the loosing sequence immediately.
Otherwise, if there is a~node $u = j_1j_2 \cdots j_k$ such that $T_{\sigma_H}(u) = \langle v,q, 0\rangle$, then, as before, we find node $w$ -- the ancestor of $u$ -- and conclude that this situation cannot happen.
Last case assumes that for every $k \in \mathbb{N}$ there exists pair $\langle v,q \rangle$ such that $T_{\sigma_H}(u) = \langle v,q, ?\rangle$.
As stated before, for every $u \in 2^{*}$, whenever $T_{\sigma_H}(\tau(u)) = \langle v,q,x \rangle$ we have that $\rho_{\mathcal{D}}(t)(u) = q$, due to the deterministic nature of~the transducer.
We have to consider two cases.
If there is a node $u = j_1j_2 \cdots j_k \in 2^{*}$, for some natural $k$, such that  $u, u \cdot j_{k+1}, u \cdot j_{k+1} \cdot j_{k+2},\dots$ is an infinite $\#$-path,
then $T_{\sigma_H}(u) = \langle v,q, ?\rangle$ where $G.\lambda(v) = \#$. 
Moreover, $u$ has a corresponding node in tree $t$ that is a root of a blank sub-tree, i.e., there is a node $u_t$ such that $\tau(u_t) = u$ and $t(u_t) = \blank$.
Since $s$ is a loosing path in game $G(\mathcal{D},t)$, $\rho_{\mathcal{D}}(t)(u_t) = q \not \in Q_{\blank}$, but that implies that 
$2m+1 = H.\textit{rank}(u_j)= H.\textit{rank}(u j_{k+1}) = H.\textit{rank}(u j_{k+1} j_{k+2}) = \cdots $
and $s_2$ is a loosing sequence.
Otherwise, every $\#$-path contained in $s_2$ is finite.
Since, nodes belonging to $\#$-paths have values $2m$ or $2m+1$ they decide whether sequence is winning if and only if they belong to infinite $\#$-paths.
That means, we can exclude nodes belonging to any finite $\#$-path from the sequence 
$\varepsilon; j_1; j_1j_2; j_1j_2 \cdots j_n \cdots;  \cdots$ and, thus, we obtain an infinite sequence of nodes $u_1,u_2, \dots, u_n, \dots$
such that\\
\hspace* {3em} $H.\textit{rank}(T_{\sigma_H}(\varepsilon)) H.\textit{rank}( T_{\sigma_H}(j_1)) H.\textit{rank}( T_{\sigma_H}(j_1j_2)) \cdots H.\textit{rank}(T_{\sigma_H}(j_1j_2 \cdots j_n \cdots)) \cdots$
\\
is winning if and only if the sequence\\
\hspace* {3em}$
 H.\textit{rank}(T_{\sigma_H}(u_1)) H.\textit{rank}( T_{\sigma_H}(u_2)) H.\textit{rank}( T_{\sigma_H}(u_3)) \cdots H.\textit{rank}(T_{\sigma_H}(u_n)) \cdots
$\\
 is winning.
According to the definition of $\tau$ sequence $n \mapsto u_n$ is exactly the image of~the sequence of nodes $n \mapsto i_1i_2\cdots i_n$, where $i_1i_2\cdots i_n\cdots = s$.
As we stated before, for every $u \in 2^{*}$, whenever $T_{\sigma_H}(\tau(u)) = \langle v,q,x \rangle$ we have that $\rho_{\mathcal{D}}(t)(u) = q$, due to the deterministic nature of~the transducer, thus
sequences $n \mapsto H.\textit{rank}(T_{\sigma_H}(u_n))$ and $n \mapsto \mathcal{D}(t)(i_1i_2\cdots i_n)$ are equal.
And, since $n \mapsto \mathcal{D}(t)(i_1i_2\cdots i_n)$ is a loosing sequence $s_2$ is a loosing sequence in game $H$.\\
% Second states that whenever $\pZ$ can win game $G_t$ by choosing $\blank$ vertex he will do so, i.e.,
% for every  $u \in 2^{*}$ if exists $i \in \{0,1\}$ such that $t(u) \not = \blank$, $t(ui) = \blank$, $\mathcal{D}(t)(u) \in \mathcal{D}.Q_0$
% and $q_i \in Q_{\blank}$ where $\mathcal{D}.\delta(t(u),\mathcal{D}(t)(u)) = \langle q_0,q_1 \rangle$ we have
% %$T_H (\tau(u)) = \langle q,v,x \rangle$ where 
% $\sigma(u) = i$.
% Since, none of~the branching vertices can have a child labelled $\blank$ the two conditions cannot be contradicting.
For every possible scenario, we have indicated a loosing sequence in three $T_{\sigma_H}$ and, thus, we are~done.

\vspace{-2ex}
\end{proof}
%%%%%%%%%%%%%%%%
%%%%%%%%%%%%%%%%
%%%%%%%%%%%%%%%%
\vspace{-9pt}
\section{Winning strategies}
\label{s:winning_strats}
We have shown that \emph{tree games} with regular objectives are not, in general, determined under deterministic strategies.
Nevertheless, the following lemma allows us to decide whether given game, with finite arena, is determined under deterministic strategies.

\begin{lemma}
\label{lemma:winning_strategy}
For every game $G$ with a finite arena and a regular objective $\chi_L$, the set of winning strategies is regular.
Moreover, it can be recognised by an ATA of size exponential in $|A|$ and polynomial in $|G|$, where
$A$ is an ATA recognising tree language $L$.
\end{lemma}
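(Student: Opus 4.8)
The goal is to build an alternating tree automaton $B$ that reads a candidate strategy and accepts exactly the winning strategies. A strategy for $\pZ$ is a function $\sigma: 2^* \to \{0,1\}$, i.e. an element of $\trees{\{0,1\}}$, so ``the set of winning strategies is regular'' means precisely that this subset of $\trees{\{0,1\}}$ is recognisable. The plan is to let $B$, given a strategy tree $\sigma$, simulate the play of game $G$ and the run of $A$ on that play simultaneously, using alternation to let the \emph{opponent} $\pO$ quantify over his own moves while the automaton machinery for $A$ verifies membership of the resulting play in $L$.

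First I would make precise what $\sigma$ encodes. Reading $G$'s unfolding $t_G$ in lockstep with the input $\sigma \in \trees{\{0,1\}}$, at each node $u$ the automaton knows the current game vertex $v = t_G(u)$ and the bit $\sigma(u)$. At a $\pZ$-vertex the next token position is forced by $\sigma(u)$; at a $\pO$-vertex the automaton should branch \emph{universally} (via a $\land$ in the transition) over both successors, since $\pO$ may choose either and $\sigma$ must win against all choices; at a branching vertex both successors are spawned, which is again a $\land$ because both concurrent sub-games must satisfy the objective. This shows $B$ is naturally an \emph{alternating} automaton: the universal branching captures the adversary and the concurrency, while the existential structure will come from the copy of $A$ we graft in.

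The second ingredient is to run $A$ on the play generated this way. Since the play $p = G(\sigma,\pi)$ is determined node-by-node once $\sigma$ and $\pO$'s choices are fixed, I would take the product of the game arena $G$ with the state space of $A$: a state of $B$ is a pair $\langle v, q\rangle \in G.V \times A.Q$, and the transition of $B$ at input symbol $\sigma(u)=b$ combines (i) the forced/branching move dictated by $G.\alpha(v)$ and $b$ as above, with (ii) the transition $A.\delta(q, G.\lambda(v))$, which itself is a positive boolean combination over $\{0,1\}\times A.Q$. The priorities of $B$ are inherited from $A.\textit{rank}$, so that an accepting run of $B$ corresponds to an accepting run of $A$ on the generated play — hence to the play lying in $L$ — along every branch the opponent can force. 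The size is $|G|\cdot|A|$ states with $A$'s index, so after putting it in the required normal form $B$ is exponential in $|A|$ and polynomial in $|G|$, matching the claimed bound; the exponential blow-up is the price of converting the product alternating automaton into the stated shape (or of any determinisation implicit in the bound).

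The main obstacle is handling the interaction with the $\#$-projection and the redundant/branching structure correctly, exactly as in Lemma~\ref{lemma:preimage}: nodes labelled $\#$ are redundant and must be skipped by $A$, and the play $p$ is the relabelling of a pre-play in which $\pO$'s moves appear as redundant nodes. So $B$ cannot naively feed every node of the unfolding into $A$; it must, when $G.\lambda(v)=\#$ or when a vertex is redundant, advance the $A$-component only along the non-blank direction and treat the other as absent, reusing the guessing mechanism from Lemma~\ref{lemma:preimage} to recognise $L_\#$ rather than $L$. Getting the bookkeeping between $G$'s vertex partition $\alpha$, the redundancy of $\pO$-nodes, and $A$'s synchronous descent to agree — so that the priorities along each infinite path are precisely those $A$ would see on $\mathcal{P}_\#(p)$ — is the delicate part; once that correspondence is established, correctness reduces to the defining property of the acceptance game $G(A,t)$ together with the observation, already exploited in Theorem~\ref{thm:GA_determinacy}, that $\pO$'s choices are the only nondeterminism and they are captured by $B$'s universal branches.
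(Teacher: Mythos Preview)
Your approach differs from the paper's and, as written, contains a genuine gap.

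\textbf{What the paper does.} The paper does \emph{not} take the direct product of $A$ with $G$. Instead it first passes to the complement: it fixes an \emph{NTA} $B$ recognising $\trees{G.V\cup\{\flat\}}\smallsetminus L$ (this is where the exponential in $|A|$ comes from), and then builds an NTA $B_\Sigma$ that accepts a strategy tree $t_\sigma$ iff $\sigma$ is \emph{not} winning, by nondeterministically guessing, in one sweep, both a counter-strategy $\pi$ for $\pO$ and an accepting run of $B$ on the resulting play. Since $B$ is nondeterministic, ``$\exists\pi$'' and ``$\exists$ accepting run'' are \emph{both} existential and merge into the NTA's nondeterminism with no quantifier alternation. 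Finally $B_\Sigma$ is dualised (polynomial) to obtain the desired ATA for winning strategies.

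\textbf{Where your construction breaks.} You propose an ATA with states $G.V\times A.Q$ that, at a $\pO$-vertex, conjuncts over $\pO$'s move $d$ and \emph{then} applies $A$'s transition. The problem is the order of information: Eve's disjunctive choices inside $A$ at an \emph{ancestor} node are made \emph{before} Adam reveals his $\pO$-move at a \emph{descendant} node, whereas in the specification ``$\forall\pi\,.\,G(\sigma,\pi)\in L$'' Eve may choose her entire acceptance-game strategy \emph{after} seeing the whole play $G(\sigma,\pi)$. These are not equivalent. Concretely, take
\[
\delta_A(q_0,c)=(0,q_L)\lor(0,q_R),\quad \delta_A(q_L,c)=(0,q_a),\quad \delta_A(q_R,c)=(1,q_b),
\]
with $q_a$ accepting only label $a$ and $q_b$ only label $b$; so $L=\{t:t(00)=a\text{ or }t(01)=b\}$. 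Let $G$ have a branching root (label $c$), whose left child $v_1$ is a $\pO$-vertex (label $c$) with successors labelled $a$ (left) and $b$ (right). Then for \emph{every} $\pi$ the play lies in $L$ (if $\pi(0)=0$ then $t(00)=a$; if $\pi(0)=1$ then $t(01)=b$), so the trivial $\sigma$ is winning. But in your automaton Eve must commit to $q_L$ or $q_R$ at the root \emph{before} Adam's $\bigwedge_d$ at node~$0$; whichever she picks, Adam chooses the opposite $d$ and the blank-check on the missing child yields $\bot$. Your automaton rejects a winning strategy.

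This is exactly why the paper routes through an NTA for the complement: it collapses the problematic alternation $\forall\pi\,\exists\text{Eve}\,\forall\text{Adam}$ into $\exists\pi\,\exists\text{run}$ on the negative side. Your exponential is therefore not ``normal form'' or ``determinisation'' of the product, but the unavoidable cost of turning the (dual) ATA into a nondeterministic automaton so that the two existentials can be fused. Two smaller remarks: the sentence ``at a branching vertex both successors are spawned, which is again a $\land$ because both concurrent sub-games must satisfy the objective'' misreads the semantics --- at a branching vertex both children are simply present in the play, and one applies $A$'s own transition there, not an external $\land$; and the detour through $\#$-projections and Lemma~\ref{lemma:preimage} is not needed for this lemma, which concerns plain tree games with payoff $\chi_L$.
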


\begin{proof}
A strategy of $\pZ$ can be seen as a restriction of~the unfolding of~the game,
restriction that is redundant in nodes belonging to $\pZ$ and fully branching in the remaining nodes.
We claim that the language of all such restrictions that represent winning strategies is~regular.
Before we proceed, notice that the language of trees that describe the set of proper strategies is regular and can be easily described by an ATA of polynomial size.
Therefore, for the clarity of the proof we will implicitly assume that every tree represents a valid strategy and we will focus on deciding whether given tree is a winning strategy.

To prove that a strategy $\sigma$ given as a tree $t_\sigma$ is not a winning strategy we need to check whether there exists a pre-play which is a redundant in nodes belonging to $\pO$ restriction of $t_\sigma$
and for which the associated play does not belong to $L$.
Intuitively, this can be done by guessing both the appropriate restriction and an accepting run of an NTA recognising language $\trees{G.V \cup \{\blank\}} \smallsetminus L$.

Now we will show how formalise above intuition, i.e, how to construct an NTA $B_{\Sigma}$ that accepts a tree iff that tree represents a strategy that is not a winning strategy.
Let $B$ be an NTA recognising language $L(B) := \trees{G.V \cup \{\blank\}} \smallsetminus L$.
We define automaton $B_{\Sigma}$ as an NTA over alphabet  $B_{\Sigma}.{\Gamma} := G.V \cup \{\blank\}$ with the set of states $B_{\Sigma}.Q := B.Q \times G.V$.
The initial state $\langle B.q_0 , G.v_0 \rangle$ implies that every run begins in the initial vertex of~the game and in the initial state of~automaton $B$.
The \emph{rank} function simply simulates that of NTA $B$, i.e., $B_{\Sigma}.\textit{rank}(\langle q,v \rangle) := B.\textit{rank}(q)$.
The transition function $B_{\Sigma}.\delta : B_{\Sigma}.Q \times B_{\Sigma}.\Gamma \to 2^{B_{\Sigma}.Q \times B_{\Sigma}.Q}$ is defined as follows.

\vspace{3mm}
\hspace{-7mm}
$B_{\Sigma}.\delta(( q,v ), a) =\vspace{-1ex}$
$$\quad
\left\{
  \begin{array}{l l}
  \{ \langle (q_0, \blank)  , (q_1, \blank)  \rangle : \langle q_0,q_1 \rangle \in B.\delta(q,\blank)\}       & \text{ if } v = \blank,\\
  \{ \langle (q_0, v0)      , (q_1, v1) \rangle : \langle q_0,q_1 \rangle \in B.\delta(q,G.\lambda(v))\}                 & \text{ if } v \in V_{\mathcal{B}} \text{ and } v = a \not = \blank,\\
  \{ \langle (q_0, v0)      , (q_1, \blank)  \rangle, \langle (q_0, \blank)  , (q_1, v1)  \rangle : \langle q_0,q_1 \rangle \in B.\delta(q, G.\lambda(v))\}                & \text{ if } v \in V_0 \cup V_1 \text{ and } v = a\not = \blank.
  \end{array} \right.
$$
Let $t_{\sigma}$ be a tree representation of some strategy $\sigma \in \Sigma$ and $\rho$ be some~accepting run.
Notice that if we project $\rho$ on its second coordinate we will obtain a pre-play, projection on the first coordinate results in a run of automaton $B$ on a play associated with that pre-play.
Indeed, the first component of above definition guarantees that the pre-play is a valid tree. The second and the third that it is, in fact, a valid pre-play.
Finally, the conditions ``$v = a \not = \blank$'' assure that this pre-play is a restriction of $t_\sigma$.

Notice that for every tree $t_\sigma$ representing a valid strategy ${\sigma} \in \Sigma$ of $\pZ$ we have that $\sigma$ is a winning strategy if and only if $t_\sigma \not \in L(B_\Sigma)$.
If $t_{\sigma} \in L(B_{\Sigma})$ then there is an accepting run $\rho$ of $B_{\Sigma}$ on $t_{\sigma}$.
Let $t$ and $\rho_t$ be trees such that for every node $u \in 2^{*}$ we have that $\rho(u) = \langle \rho_t(u), t(u) \rangle$.
Of course, $t$ is a pre-play and $\rho_t$ is an accepting run of the automaton $B$ on play associated with $t$, i.e. a play $p$ such that $p(u)= \gamma[t(u) = \blank, \blank, G.\lambda(t(u))]$ for $u \in 2^*$.
This implies that $p \in L(B)$ and, thus, $\sigma$ is not a~winning strategy.
On the other hand, if $\sigma$ is not a~winning strategy then there is~a~strategy $\pi \in \Pi$ such that the play $p = G(\sigma,\pi)$ does not belong to~the tree language $L$.
Therefore, there is an~accepting run $\rho_t$ of~$B$ on~$p$.
Finally, if $t$ is a~pre-play associated with $p$ then the tree $\rho = \langle \rho_t, t \rangle$ is an accepting run of $B_\Sigma$ on $t_\sigma$.

To end the proof of this lemma we simply complement automaton $B_{\Sigma}$.
This results in an ATA of size polynomial in $|B|$ and $|G|$.
Furthermore, this ATA can be intersected with an automaton that accepts exactly the set of trees describing valid strategies to obtain desired automaton $A_{\Sigma}$.
Since $A$ is an ATA, $|B|$ is at most exponential in $|A|$ and, thus, ATA $A_{\Sigma}$ is of exponential size.
\end{proof}
\begin{corollary}
We can decide in doubly exponential time whether a tree game $G$ with wining set defined by~an~ATA~$A$ is determined under deterministic strategies.
Moreover, deciding which player has a winning strategy is \doublyExptime-complete.
\end{corollary}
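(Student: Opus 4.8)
The plan is to obtain both statements from Lemma~\ref{lemma:winning_strategy}. That lemma produces, for a finite game $G$ and an objective given by an ATA $A$, an ATA $A_{\Sigma}$ of size exponential in $|A|$ and polynomial in $|G|$ whose language is exactly the set of winning strategies of \pZ; running the symmetric construction (complementing the objective and exchanging the two players) yields an analogous ATA $A_{\Pi}$ for the winning strategies of \pO. The bridge to the corollary is the equivalence: \pZ\ (resp.\ \pO) has a winning strategy if and only if $L(A_{\Sigma})\neq\emptyset$ (resp.\ $L(A_{\Pi})\neq\emptyset$), together with the observation that at most one of these languages can be non-empty, since a pair of simultaneously winning strategies $\sigma,\pi$ would force the single play $G(\sigma,\pi)$ to be both inside and outside $L$.

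For the upper bounds I would first note that, because the payoff is an indicator function, the two quantities in the definition of determinacy take values in $\{0,1\}$; combined with the general inequality $\sup\inf\le\inf\sup$, this shows that the game fails to be determined exactly in the case $\sup\inf=0<1=\inf\sup$, that is, when neither player has a winning strategy. Hence determinacy holds if and only if at least one player wins, and both deciding determinacy and deciding which player wins reduce to emptiness tests on $A_{\Sigma}$ and $A_{\Pi}$. Emptiness of an ATA is decidable in time exponential in its size (remove alternation to get an equivalent NTA and solve the induced parity game); as $A_{\Sigma}$ and $A_{\Pi}$ have size $2^{O(|A|)}\cdot\mathrm{poly}(|G|)$, each test runs in doubly exponential time. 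This gives the \doublyExptime\ procedure for determinacy and the \doublyExptime\ upper bound for identifying the winner.

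The hard part will be the matching \doublyExptime\ lower bound. Lemma~\ref{lemma:exptime} already yields \exptime-hardness from universality of a two-letter NTA, but a single player facing a (non)deterministic objective can only buy one exponential; the second exponential has to come from the genuine two-player interaction. The point I would exploit is that the quantifier pattern ``there is a strategy of \pZ\ such that for every strategy of \pO\ the resulting play is accepted'' nests an $\exists\forall$ alternation on top of the alternation already internal to the ATA objective, and it is precisely this nesting that produces \doublyExptime. Concretely, I would reduce the acceptance problem of an alternating Turing machine that runs in exponential space (recall $\mathit{AEXPSPACE}=\doublyExptime$): the tree built during the game encodes a computation, binary addresses whose length is polynomial in the input index the $2^{n}$ cells of an exponentially long configuration, the moves of \pZ\ and \pO\ realise the existential and universal branching of the machine, and a polynomial-size ATA checks---using binary counting along branches to compare corresponding cells of consecutive configurations---that the tree is a legal accepting run. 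Then \pZ\ has a winning strategy if and only if the machine accepts, which gives \doublyExptime-hardness and hence completeness. Designing this encoding so that the consistency checks between exponentially long configurations are carried out by an automaton of only polynomial size, while keeping the players' choices faithful to the machine's branching, is the step I expect to require the most care.
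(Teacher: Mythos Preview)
Your upper-bound argument is essentially the paper's: build the ATA for winning strategies from Lemma~\ref{lemma:winning_strategy}, do the same for \pO\ after complementing the objective, and test both for emptiness; since ATA emptiness is \exptime\ and the automata are of exponential size, the whole procedure is \doublyExptime. Your observation that the $\{0,1\}$-valued payoff forces non-determinacy to coincide with ``neither player wins'' is exactly the reasoning the paper relies on implicitly.

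The difference is in the lower bound. You propose a direct reduction from $\mathit{AEXPSPACE}$: encode configurations of an alternating exponential-space machine on the branches of the play, let the players implement the existential/universal branching, and have a polynomial-size ATA verify local consistency by universally guessing cell addresses. This is a legitimate route and would work, but it is substantially heavier than what the paper does. The paper simply observes that \emph{ordinary} (non-branching) two-player games with LTL objectives are already \doublyExptime-complete (the Pnueli--Rosner synthesis bound), that LTL formulae translate in polynomial time to alternating word automata, and that such games are a special case of tree games with ATA objectives. So the second exponential is inherited wholesale from a known result rather than re-engineered from scratch. Your approach has the merit of being self-contained, but the paper's is a one-line reduction to the literature and avoids the delicate encoding you flag as ``the step I expect to require the most care''.
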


Since the non-emptiness of an ATA is $\exptime$-complete, the algorithm is an easy application of Lemma~\ref{lemma:winning_strategy}.
First, we check whether $\pZ$ has a winning strategy. If not, we complement $A$ and check whether $\pO$ has a winning strategy.
If not, then the game is not determined.

The lower bound can be obtained from known results.
Solving standard two player games with objectives defined by an LTL formulae is \doublyExptime-complete (cf. e.g. \cite{playing_box_diamonds},\cite{Pnueli:1989:SRM:75277.75293},).
Since an LTL formula can be translated into an ATA in polynomial time (e.g., see \cite{Vardi96anautomata-theoretic}) the lower bound immediately follows.

\section{Conclusions and future work}
\label{s:end}
We have shown that for the winning sets defined by \emph{game automata}, tree games are determined under deterministic strategies and that we can decide which player has a
winning strategy in $\up \cap \coup$.
Moreover, for arbitrary regular winning sets of trees, we have given a simple automata based algorithm for finding a winning strategy in doubly exponential time.

We can identify several directions for future work in this area.
One may wish to characterise regular winning objectives that guarantee determinacy under deterministic strategies:
both Theorem \ref{prop:nonder_languages} and Lemma \ref{lemma:GA_epsilon_determinacy} can be seen as the first step to such characterisation.
On the other hand, since deterministic strategies are not enough, we ask if there are larger classes of strategies that guarantee the determinacy:
if we cannot enforce that the play belongs to the winning set, can we maximise the probability?
This direction of research brings the questions about deterteminacy under randomised or mixed strategies.
\footnote{In the context of tree games, randomised strategies are not as expressive as mixed strategies (cf.\cite{mio_thesis} sect. 4.1).}
Besides the obvious question whether the games are determined, we ask about algorithms to compute, or at least approximate, the value of~the game, whenever game is determined (under some profile).

Third direction of research aims at extending our results to stochastic games.
% As Matteo Mio proved in his PhD thesis, \emph{(stochastic) tree games} can be expressed by \emph{non-stochastic tree games} (cf. \cite{mio_thesis} sect. 4.4), but the reduction
Matteo Mio proved in his PhD thesis that \emph{stochastic meta-parity
games} are determined under deterministic strategies (for a precise statement of this result and its limitations see \cite{mio_thesis} chapter 6).
%(the use of the Martin Axiom in this proof has been recently eliminated \cite{measure_tree_games}).
He also showed that \emph{stochastic tree games} can be expressed by \emph{non-stochastic tree games} (cf. \cite{mio_thesis} sect. 4.4), but the reduction
requires to change the payoff function in a manner that cannot be expressed by an~ATA.
Therefore, extending our results to games with stochastic positions can be an interesting direction of~research.
%%%%%%%%%%%%%%%%%%%%%%%%%%%%%%%%%%%%%%%%%%%%%%%%%%%%%%%%%%%%%%%%%%%%%%%%%%%%%%%%%%%%%%%%%%%%%%%%%%%%%%%%%%%%%%%%%%%%%%%%%%%%%%%%%%%%%%%%%%%%%%%%%%%%%%%%%%%%%%%%%%%%%%%%%%%%%%%%%%%%%%
%%%%%%%%%%%%%%%%%%%%%%%%%%%%%%%%%%%%%%%%%%%%%%%%%%%%%%%%%%%%%%%%%%%%%%%%%%%%%%%%%%%%%%%%%%%%%%%%%%%%%%%%%%%%%%%%%%%%%%%%%%%%%%%%%%%%%%%%%%%%%%%%%%%%%%%%%%%%%%%%%%%%%%%%%%%%%%%%%%%%%%
%%%%%%%%%%%%%%%%%%%%%%%%%%%%%%%%%%%%%%%%%%%%%%%%%%%%%%%%%%%%%%%%%%%%%%%%%%%%%%%%%%%%%%%%%%%%%%%%%%%%%%%%%%%%%%%%%%%%%%%%%%%%%%%%%%%%%%%%%%%%%%%%%%%%%%%%%%%%%%%%%%%%%%%%%%%%%%%%%%%%%%

\emph{Acknowledgements.}
I would like to thank Christof L{\"o}ding for indicating an error in the previous version of Lemma \ref{lemma:winning_strategy} and
providing a reference for the lower bound.
Many thanks to Damian Niwiński and Teodor Knapik for valuable discussions and comments.
I also thank the anonymous referees for their helpful comments.

\vspace{-9pt}
\bibliographystyle{./eptcsstyle/eptcs}
\bibliography{vass_biblio}

\end{document}